\documentclass[11pt,a4paper]{article}
\usepackage{amsmath, amsfonts, amssymb}
\usepackage{a4wide}
\usepackage{parskip}
\usepackage{enumitem}
\usepackage{xcolor}

\usepackage{hyperref}
\usepackage{booktabs}
\usepackage{caption}
\usepackage{siunitx}
\usepackage{tabularx}
\usepackage{comment}
\usepackage{multirow}
\usepackage{adjustbox}
\usepackage{mathtools}
\usepackage{caption}
\usepackage{parskip}
\usepackage{graphicx}
\usepackage{adjustbox}
\usepackage{multirow}
\usepackage{amsthm}
\usepackage{bm, makecell}
\usepackage{lscape}
\usepackage{rotating}
\usepackage{floatrow}
\usepackage[noadjust]{cite}

\newtheorem{theorem}{Theorem}[section]
\newtheorem{lemma}[theorem]{Lemma}
\newtheorem{corollary}[theorem]{Corollary}
\newtheorem{proposition}[theorem]{Proposition}
\theoremstyle{definition}
\newtheorem{definition}[theorem]{Definition}

\newtheorem{remark}[theorem]{Remark}

\newcommand{\Ann}{\operatorname{Ann}}

\usepackage{tikz,xcolor,hyperref}
\usepackage{mathdots}
\definecolor{lime}{HTML}{A6CE39}
\DeclareRobustCommand{\orcidicon}{%
	\begin{tikzpicture}
		\draw[lime, fill=lime] (0,0) 
		circle [radius=0.16] 
		node[white] {{\fontfamily{qag}\selectfont \tiny ID}};
		\draw[white, fill=white] (-0.0625,0.095) 
		circle [radius=0.007];
	\end{tikzpicture}
	\hspace{-2mm}
}

\foreach \x in {A, ..., Z}{%
	\expandafter\xdef\csname orcid\x\endcsname{\noexpand\href{https://orcid.org/\csname orcidauthor\x\endcsname}{\noexpand\orcidicon}}
}

\begin{document}
\date{}
\title{Hermitian Duality and Syndrome Structure of CSS-Type Codes over the ring $\mathbb{Z}_q[i]$}
		\author{{\bf Akanksha Tiwari  \footnote{email: {\tt akankshafzd8@gmail.com} }\orcidA{} and \bf Ritumoni Sarma\footnote{	email: {\tt ritumoni407@gmail.com}}\orcidC{}} \\\\ $*,\dagger$ Department of Mathematics \\ Indian Institute of Technology Delhi\\Hauz Khas, New Delhi-110016, India\\\\ }
\maketitle

\begin{abstract}

We study the algebraic structure of principal ideal codes and
CSS-type stabilizer constructions over the ring
$
R_q=\mathbb Z_q[i],$ where every prime divisor of $q$ is
congruent to $1$ modulo $4$. Although such ring-based constructions
have recently been considered in the context of classical and
quantum error correction, the precise relationship among principal
ideal cardinalities, Hermitian duality, algebraic cosets, and
stabilizer syndromes requires further structural analysis. 

We obtain
cardinality formulas for principal ideals and their annihilators.
For a length-one principal ideal code $C=\langle\alpha\rangle$, we
show that its Hermitian dual is
$
C^{\perp_H}=\operatorname{Ann}(\sigma(\alpha)).
$ For nested codes satisfying
$
C_2^{\perp_H}\subseteq C_1\subseteq C_2,
$
we determine the corresponding CSS-type stabilizer and prove that
the kernel of the physical $X$-error syndrome map is
$
\ker(\operatorname{Syn}_X)=C_2.
$
Consequently, $C_2/C_1$ parametrizes logical $X$-operator classes and $R_q^n/C_2$ parametrizes physical $X$-error syndrome
classes. Using this syndrome quotient, we construct a transversal of $C_2$ in
$R_q^n$ whose elements have pairwise distinct physical $X$-error
syndromes. The corresponding $X$-type error family is correctable by
stabilizer syndrome measurement, yielding a syndrome-based recovery
procedure that is consistent with the stabilizer structure. Explicit examples illustrate the resulting duality,
cardinality, and syndrome structure.

\end{abstract}

\noindent\textbf{Keywords:}
Quantum Error Correction;
Principal Ideal Codes; CSS codes; Stabilizer Codes; Syndrome
Structure; Logical Operators.

\medskip

\noindent\textbf{2020 Mathematics Subject Classification:}
94B05, 94B15, 94B35, 94B60.

\section{Introduction}

Quantum error correction provides an algebraic framework for
protecting quantum information against errors arising during
storage, transmission, and computation. Among the fundamental
constructions are stabilizer codes~\cite{gottesman1997stabilizer}
and Calderbank--Shor--Steane (CSS) codes
\cite{calderbank1996good,steane1996multiple}, originally formulated
over binary fields and subsequently extended to nonbinary finite
fields~\cite{ashikhmin2001nonbinary}. The connection between coding
theory over finite rings and nonlinear binary codes was highlighted
by Hammons et al.~\cite{hammons1994z}, motivating further study of
ring-based coding structures.

Quantum codes over rings have received comparatively limited
attention. Hostens et al.~\cite{PhysRevA.71.042315} developed a
stabilizer formalism over $\mathbb Z_d$, while Güzeltepe
\cite{Ozen2010QuantumCF} studied quantum codes using the Mannheim
metric. More recently, Güzeltepe and Aytaç
\cite{GUZELTEPE2027102881} proposed a CSS-type construction over the ring $\mathbb{Z}_q[i]$ together with a coset-based decoding
framework. The present work revisits the underlying algebraic and
stabilizer-theoretic structure, with particular attention to
principal ideal cardinality, Hermitian duality, and the
interpretation of physical syndrome distinguishability.

The present work revisits the algebraic and stabilizer-theoretic
foundations of CSS-type constructions over the ring $
R_q=\mathbb{Z}_q[i].$
Under the assumption that every prime divisor of $q$ is congruent to
$1$ modulo $4$, we use the Chinese remainder decomposition of $R_q$
to obtain a valuation-based description of principal ideals and their
cardinalities. This framework also leads to an annihilator-based
description of the Hermitian dual of a principal ideal code.

For nested codes
$C_2^{\perp_H}\subseteq C_1\subseteq C_2\subseteq R_q^n,$
we examine the associated CSS-type stabilizer structure and determine
the physical $X$-error syndrome space. A central point is that the
quotient $C_2/C_1$ and the physical syndrome quotient play different
roles: the former describes logical $X$-operator classes modulo
$X$-type stabilizers, whereas the latter is determined by the kernel
of the stabilizer syndrome map. This distinction is essential when
interpreting coset representatives as physical error representatives.

Using the resulting syndrome structure, we further construct a
transversal of the physical syndrome classes and obtain a correctable
family of $X$-type errors together with a corresponding syndrome-based
recovery procedure. Explicit examples over rings $\mathbb{Z}_q[i]$
illustrate the distinction between logical and syndrome quotients and
the applicability of the CRT--valuation cardinality formula.

The remainder of the paper is organized as follows.
Section~2 describes the structure of the ring
$R_q$.  Section~3 develops principal ideal cardinalities,
annihilators and Hermitian duality.  Section~4 studies
non-degenerate linear functionals and faithful characters.
Section~5 discusses nested codes and their quotient structure.
Section~6 develops the CSS-type stabilizer construction, determines its
syndrome and logical-operator structure, establishes the stabilizer
correctability criterion, and constructs a syndrome-based $X$-error
recovery framework using representatives of $R_q^n/C_2$.
Section~7 provides explicit examples, including a revisit of the coset
construction over $\mathbb{Z}_{85}[i]$. Section~8 concludes the paper.
\section{Structure of the Ring \texorpdfstring{$\mathbb Z_q[i]$}{Zq[i]}}
\label{sec:ring-structure}

Throughout the paper, let
$
R_q
=
\mathbb Z_q[i]
=
\mathbb Z_q[x]/\langle x^2+1\rangle.
$
Equivalently,
$
R_q
=
\{a+bi:\ a,b\in\mathbb Z_q,\ i^2=-1\}.
$
The ring $R_q$ is a finite commutative ring with identity and
$
|R_q|=q^2.
$

Although $R_q$ is defined for every positive integer $q$, throughout
the main results of this paper we assume that every prime divisor of
$q$ is congruent to $1$ modulo $4$. This hypothesis will first be
used in Section~2.2 to obtain the splitting decomposition.

Further, define
$
\sigma:R_q\longrightarrow R_q
$
by
$
\sigma(a+bi)=a-bi.
$
The map $\sigma$ is an involutive ring automorphism of $R_q$.
\subsection{Prime-power decomposition}
Write $
q=\prod_{j=1}^{s}p_j^{e_j},
$ where the $p_j$ are distinct primes and $e_j\geq1$.
By the Chinese Remainder Theorem,
$
\mathbb Z_q
\cong
\prod_{j=1}^{s}\mathbb Z_{p_j^{e_j}}.
$
\begin{theorem}[Prime-power decomposition]
\label{thm:prime-power-decomposition}
There is a ring isomorphism
\[
R_q
\cong
\prod_{j=1}^{s}\mathbb Z_{p_j^{e_j}}[i].
\]
\end{theorem}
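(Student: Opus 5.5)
The plan is to obtain the decomposition by tensoring the Chinese remainder isomorphism for $\mathbb Z_q$ with $\mathbb Z[x]/\langle x^2+1\rangle$. An equivalent and more concrete route is to write down the map explicitly and check that it is a bijective ring homomorphism; I would do the latter.

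For each $j$, let $\pi_j:\mathbb Z_q\to\mathbb Z_{p_j^{e_j}}$ be reduction modulo $p_j^{e_j}$, which is well defined because $p_j^{e_j}\mid q$. Define
\[
\Phi:R_q\longrightarrow\prod_{j=1}^{s}\mathbb Z_{p_j^{e_j}}[i],\qquad
\Phi(a+bi)=\bigl(\pi_1(a)+\pi_1(b)i,\ \dots,\ \pi_s(a)+\pi_s(b)i\bigr).
\]

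First, $\Phi$ is a ring homomorphism. Additivity is immediate. For multiplication, $(a+bi)(c+di)=(ac-bd)+(ad+bc)i$ in every ring $\mathbb Z_m[i]$, each $\pi_j$ is a ring homomorphism, and so $\Phi$ respects products; also $\Phi(1)=1$. More conceptually, $\Phi$ is the map induced on $\mathbb Z_q[x]/\langle x^2+1\rangle$ by applying $\pi_j$ to coefficients. This map sends the ideal $\langle x^2+1\rangle$ into $\langle x^2+1\rangle$, so it descends to the quotient.

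Second, $\Phi$ is bijective. Suppose $\Phi(a+bi)=0$. Then $\pi_j(a)=\pi_j(b)=0$ for all $j$. The CRT isomorphism $\mathbb Z_q\cong\prod_j\mathbb Z_{p_j^{e_j}}$ is injective, so $a=b=0$; this gives injectivity. Both sides have the same size,
\[
|R_q|=q^2=\prod_j p_j^{2e_j}=\prod_j\bigl|\mathbb Z_{p_j^{e_j}}[i]\bigr|,
\]
so $\Phi$ is also surjective. Alternatively, surjectivity follows directly: lift the real and imaginary parts componentwise using surjectivity of the CRT map.

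None of these steps is a real obstacle. The only point needing care is that $\Phi$ is well defined, which is the observation $p_j^{e_j}\mid q$. The standing hypothesis $p_j\equiv1\pmod 4$ is not needed here; it becomes relevant only when each factor $\mathbb Z_{p_j^{e_j}}[i]$ is split further.
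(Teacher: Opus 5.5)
Your proof is correct and is essentially the paper's argument. The paper applies the CRT isomorphism $\mathbb Z_q\cong\prod_j\mathbb Z_{p_j^{e_j}}$ coefficientwise to obtain $\mathbb Z_q[x]\cong\prod_j\mathbb Z_{p_j^{e_j}}[x]$ and then passes to the quotient by $x^2+1$ componentwise; you instead write the induced map on $R_q$ explicitly and verify it is a bijective ring homomorphism (kernel plus cardinality count), which makes rigorous the ``componentwise quotient'' step the paper leaves implicit, and your remark that $p_j\equiv1\pmod4$ is not needed agrees with the paper stating this theorem before imposing that hypothesis.
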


\begin{proof}
Using
\[
R_q
=
\mathbb Z_q[x]/\langle x^2+1\rangle
\]
and the Chinese remainder decomposition of $\mathbb Z_q$, we
obtain
\[
\mathbb Z_q[x]
\cong
\prod_{j=1}^{s}\mathbb Z_{p_j^{e_j}}[x].
\]
Passing to the quotient by $x^2+1$ componentwise gives
\[
\mathbb Z_q[x]/\langle x^2+1\rangle
\cong
\prod_{j=1}^{s}
\mathbb Z_{p_j^{e_j}}[x]/\langle x^2+1\rangle,
\]
which is precisely
\[
R_q
\cong
\prod_{j=1}^{s}\mathbb Z_{p_j^{e_j}}[i].
\]
\end{proof}

\subsection{Splitting when \texorpdfstring{$p_j\equiv1\pmod4$}{pj ≡ 1 (mod 4)}}

From this point onward, unless stated otherwise, we assume that
every prime divisor $p_j$ of $q$ satisfies
\[
p_j\equiv1\pmod4.
\]

Since $p_j\equiv 1\pmod 4$, the polynomial $x^2+1$ has a root
modulo $p_j$. Since $p_j$ is odd, this root lifts to a root modulo
$p_j^{e_j}$. Choose $\omega_j\in\mathbb Z_{p_j^{e_j}}$ such that
$
\omega_j^2=-1.
$
\begin{theorem}[Splitting decomposition]
\label{thm:splitting-decomposition}
For every $j$,
$
\mathbb Z_{p_j^{e_j}}[i]
\cong
\mathbb Z_{p_j^{e_j}}
\times
\mathbb Z_{p_j^{e_j}}
$
via
\[
a+bi
\longmapsto
\bigl(a+b\omega_j,\ a-b\omega_j\bigr).
\]
Consequently,
\[
R_q
\cong
\prod_{j=1}^{s}
\left(
\mathbb Z_{p_j^{e_j}}
\times
\mathbb Z_{p_j^{e_j}}
\right).
\]
\end{theorem}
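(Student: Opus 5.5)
The plan is to verify directly that the map
\[
\phi_j:\mathbb Z_{p_j^{e_j}}[i]\to \mathbb Z_{p_j^{e_j}}\times\mathbb Z_{p_j^{e_j}},\qquad a+bi\mapsto (a+b\omega_j,\ a-b\omega_j),
\]
is a ring isomorphism, and then to combine this with Theorem~\ref{thm:prime-power-decomposition}. Conceptually, $\phi_j$ is the map $\mathbb Z_{p_j^{e_j}}[x]/\langle x^2+1\rangle\to \mathbb Z_{p_j^{e_j}}[x]/\langle x-\omega_j\rangle\times \mathbb Z_{p_j^{e_j}}[x]/\langle x+\omega_j\rangle$ given by evaluation at $\pm\omega_j$. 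One could invoke the Chinese Remainder Theorem for the comaximal ideals $\langle x-\omega_j\rangle$ and $\langle x+\omega_j\rangle$, but a hands-on verification is shorter.

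\textbf{Step 1: $\phi_j$ is a ring homomorphism.} Additivity and $\phi_j(1)=(1,1)$ are immediate. For multiplicativity, I will take $(a+bi)(c+di)=(ac-bd)+(ad+bc)i$ and compare its image with $(a+b\omega_j)(c+d\omega_j)$. Expanding the latter and using $\omega_j^2=-1$ gives $(ac-bd)+(ad+bc)\omega_j$, and the same computation works with $-\omega_j$. Equivalently, $\phi_j$ is induced by the two evaluation homomorphisms $x\mapsto \pm\omega_j$ on $\mathbb Z_{p_j^{e_j}}[x]$, both of which kill $x^2+1$ because $(\pm\omega_j)^2=-1$.

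\textbf{Step 2: $\phi_j$ is bijective.} Both sides have cardinality $p_j^{2e_j}$, so injectivity suffices. Suppose $a+b\omega_j=0$ and $a-b\omega_j=0$. Adding gives $2a=0$, and subtracting gives $2b\omega_j=0$. Since $p_j$ is odd, $2$ is a unit in $\mathbb Z_{p_j^{e_j}}$. Since $\omega_j^2=-1$, $\omega_j$ is a unit with inverse $-\omega_j$. Hence $a=b=0$. As an alternative, the inverse can be written explicitly as $(u,v)\mapsto 2^{-1}(u+v)+2^{-1}\omega_j^{-1}(u-v)\,i$.

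\textbf{Step 3: assemble.} Taking the product of the $\phi_j$ and composing with the isomorphism of Theorem~\ref{thm:prime-power-decomposition} yields the stated decomposition of $R_q$.

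The only delicate point is the invertibility of $2$ and of $\omega_j$. This is exactly where the hypothesis that $p_j$ is odd (guaranteed here by $p_j\equiv1\pmod 4$) and the existence of the Hensel-lifted root $\omega_j$ are used. Everything else is routine verification.
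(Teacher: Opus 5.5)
Your proof is correct and is essentially the paper's argument. The paper applies the Chinese Remainder Theorem to the ideals $\langle x-\omega_j\rangle$ and $\langle x+\omega_j\rangle$ of $\mathbb Z_{p_j^{e_j}}[x]$, which are comaximal because their difference $2\omega_j$ is a unit; you instead check by hand that the same evaluation map is a ring homomorphism and is injective, using the same units $2$ and $\omega_j$, conclude by counting, and finish with Theorem~\ref{thm:prime-power-decomposition} exactly as the paper does.
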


\begin{proof}
Since
$
x^2+1
=
(x-\omega_j)(x+\omega_j)
$
in $\mathbb Z_{p_j^{e_j}}[x]$, it is enough to verify that the two
factors are comaximal. Their difference is
$
(x+\omega_j)-(x-\omega_j)=2\omega_j.
$
Because $p_j$ is odd, $2$ is a unit in
$\mathbb Z_{p_j^{e_j}}$, and $\omega_j$ is also a unit since $\omega_j^2=-1.$
Hence $2\omega_j$ is a unit, so the ideals $
\langle x-\omega_j\rangle\text{ and }\langle x+\omega_j\rangle
$ are comaximal.\\
The Chinese Remainder Theorem therefore gives
\[
\mathbb Z_{p_j^{e_j}}[x]/\langle x^2+1\rangle
\cong
\mathbb Z_{p_j^{e_j}}[x]/\langle x-\omega_j\rangle
\times
\mathbb Z_{p_j^{e_j}}[x]/\langle x+\omega_j\rangle.
\]
Each factor is naturally isomorphic to
$\mathbb Z_{p_j^{e_j}}$, and the resulting evaluation map is
\[
a+bi
\longmapsto
\bigl(a+b\omega_j,\ a-b\omega_j\bigr).
\]
Combining this with
Theorem~\ref{thm:prime-power-decomposition} proves the result.
\end{proof}
\noindent
For later use, denote the CRT isomorphism by
$
\Phi:R_q
\longrightarrow
\prod_{j=1}^{s}
\left(
\mathbb Z_{p_j^{e_j}}
\times
\mathbb Z_{p_j^{e_j}}
\right).
$
For
$
\alpha=a+bi\in R_q,
$
write
$
\Phi(\alpha)
=
\bigl(
\alpha_{1,+},\alpha_{1,-},
\ldots,
\alpha_{s,+},\alpha_{s,-}
\bigr),
$
where
$
\alpha_{j,+}
=
a+b\omega_j\text{ and }
\,
\alpha_{j,-}
=
a-b\omega_j.
$
\subsection{Conjugation under the CRT decomposition}

The splitting decomposition also gives a particularly useful
description of conjugation.

\begin{proposition}
\label{prop:conjugation-crt}
Let
$
\Phi(\alpha)
=
\bigl(
\alpha_{1,+},\alpha_{1,-},
\ldots,
\alpha_{s,+},\alpha_{s,-}
\bigr).
$
Then
$
\Phi(\sigma(\alpha))
=
\bigl(
\alpha_{1,-},\alpha_{1,+},
\ldots,\\
\alpha_{s,-},\alpha_{s,+}
\bigr).
$
Thus conjugation interchanges the two components associated with
each prime-power factor.
\end{proposition}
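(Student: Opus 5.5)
The plan is to compute directly from the explicit formula for $\Phi$ given after Theorem~\ref{thm:splitting-decomposition}. Write $\alpha=a+bi$ with $a,b\in\mathbb Z_q$. By definition $\sigma(\alpha)=a-bi$, so $\sigma(\alpha)$ has real part $a$ and imaginary part $-b$. The $j$-th pair of components of $\Phi$ is obtained by reducing modulo $p_j^{e_j}$ (Theorem~\ref{thm:prime-power-decomposition}) and then applying $a+bi\mapsto(a+b\omega_j,\,a-b\omega_j)$. This reduction commutes with negating $b$, so it suffices to work one prime-power factor at a time.

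For fixed $j$, substituting $-b$ for $b$ gives
\[
\sigma(\alpha)_{j,+}=a+(-b)\omega_j=a-b\omega_j=\alpha_{j,-},
\qquad
\sigma(\alpha)_{j,-}=a-(-b)\omega_j=a+b\omega_j=\alpha_{j,+}.
\]
Assembling these identities over $j=1,\ldots,s$ yields the claimed tuple. This shows that $\sigma$, transported through $\Phi$, is the automorphism that swaps the coordinates within each pair.

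There is essentially no obstacle here. The only point needing care is that $\Phi$ is defined with one fixed choice of $\omega_j$ for each $j$, and that the components of $\sigma(\alpha)$ are computed with the same $\omega_j$. With that convention fixed, the swap is a direct substitution, and no well-definedness issue arises.

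As an optional conceptual cross-check, $\sigma$ corresponds to $x\mapsto -x$ on $\mathbb Z_{p_j^{e_j}}[x]/\langle x^2+1\rangle$. This map sends $\langle x-\omega_j\rangle$ to $\langle x+\omega_j\rangle$ and vice versa, so it interchanges the two CRT factors.
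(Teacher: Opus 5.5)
Your proposal is correct and is essentially the paper's proof: both write $\alpha=a+bi$, use $\sigma(\alpha)=a-bi$, and substitute $-b$ for $b$ in $(a+b\omega_j,\,a-b\omega_j)$ to see that the $j$th pair is swapped. Your extra remark that $x\mapsto -x$ interchanges $\langle x-\omega_j\rangle$ and $\langle x+\omega_j\rangle$ is a valid conceptual gloss that the paper does not include, but the argument does not need it.
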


\begin{proof}
Let
$
\alpha=a+bi.
$
Then
$
\sigma(\alpha)=a-bi.
$
Hence, in the $j$th pair,
$
\Phi_j(\sigma(\alpha))
=
\bigl(
a-b\omega_j,\ a+b\omega_j
\bigr)
=
\bigl(
\alpha_{j,-},\alpha_{j,+}
\bigr).
$
\end{proof}

This component-swapping interpretation will be useful in the
description of annihilators and Hermitian duals.

\subsection{Ideal-theoretic consequences}

Recall that every ideal of $
\mathbb Z_{p^e}$
has the form $
p^t\mathbb Z_{p^e},
\,
0\leq t\leq e.$
In particular, every ideal of $\mathbb Z_{p^e}$ is principal.

\begin{corollary}
\label{cor:Rq-PIR}
Under the splitting assumption
$
p_j\equiv1\pmod4\,\,
\text{for every }p_j\mid q,
$
the ring $R_q$ is a finite commutative principal ideal ring.
\end{corollary}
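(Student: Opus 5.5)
The plan is to transport the question through the CRT isomorphism $\Phi$ of Theorem~\ref{thm:splitting-decomposition}. Ring isomorphisms preserve ideals and their generators, so it suffices to show that the product ring
\[
S=\prod_{j=1}^{s}\left(\mathbb Z_{p_j^{e_j}}\times\mathbb Z_{p_j^{e_j}}\right)
\]
is a principal ideal ring. Finiteness and commutativity of $R_q$ were noted at the start of Section~\ref{sec:ring-structure}, since $|R_q|=q^2$. So the only content is principality.

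The key step is the standard fact that every ideal $I$ of a finite direct product $A_1\times\cdots\times A_m$ of commutative rings with identity decomposes as $I=I_1\times\cdots\times I_m$, with each $I_k$ an ideal of $A_k$. I will prove this using the orthogonal idempotents $\varepsilon_k$, which have $1$ in slot $k$ and $0$ elsewhere. Set $I_k=\pi_k(I)$, where $\pi_k$ is the $k$-th projection; each $I_k$ is an ideal of $A_k$ because $\pi_k$ is surjective. The inclusion $I\subseteq\prod_k I_k$ is clear. For the reverse inclusion, take $(x_1,\dots,x_m)$ with each $x_k=\pi_k(y^{(k)})$ for some $y^{(k)}\in I$. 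Then $\varepsilon_k y^{(k)}\in I$ for each $k$, and summing over $k$ recovers $(x_1,\dots,x_m)\in I$.

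With that decomposition, the recalled fact that every ideal of $\mathbb Z_{p^e}$ has the form $p^t\mathbb Z_{p^e}$ applies to each coordinate. Hence every ideal of $S$ has the form
\[
\prod_{j}\left(p_j^{t_{j,+}}\mathbb Z_{p_j^{e_j}}\times p_j^{t_{j,-}}\mathbb Z_{p_j^{e_j}}\right).
\]
Such an ideal is generated by the single element $g=(p_1^{t_{1,+}},p_1^{t_{1,-}},\dots,p_s^{t_{s,+}},p_s^{t_{s,-}})$, since multiplying $g$ componentwise by arbitrary elements produces every element of the product. Pulling back along $\Phi^{-1}$, every ideal of $R_q$ is $\langle\Phi^{-1}(g)\rangle$, which proves the corollary.

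There is no real obstacle here. The only point needing care is the reverse inclusion in the ideal decomposition, which depends on the ring having identity so that the idempotents $\varepsilon_k$ exist. The hypothesis $p_j\equiv1\pmod4$ enters only through Theorem~\ref{thm:splitting-decomposition}. One could avoid it by using Theorem~\ref{thm:prime-power-decomposition} together with the fact that $\mathbb Z_{p^e}[i]$ is a chain ring or PIR for every $p$, but that route is not needed here.
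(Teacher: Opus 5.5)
Your proof is correct and follows the paper's route. You transport the problem through the splitting isomorphism $\Phi$ to $\prod_{j}(\mathbb Z_{p_j^{e_j}}\times\mathbb Z_{p_j^{e_j}})$ and use that each factor is a principal ideal ring; the only difference is that you prove, via the orthogonal idempotents, that a finite direct product of principal ideal rings is again one, which the paper simply asserts as standard.
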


\begin{proof}
By Theorem~\ref{thm:splitting-decomposition},
\[
R_q
\cong
\prod_{j=1}^{s}
\left(
\mathbb Z_{p_j^{e_j}}
\times
\mathbb Z_{p_j^{e_j}}
\right).
\]
Each factor $\mathbb Z_{p_j^{e_j}}$ is a principal ideal ring, and
a finite direct product of principal ideal rings is again a
principal ideal ring.
\end{proof}

\begin{corollary}
\label{cor:Rq-Frobenius}
Under the same hypotheses, $R_q$ is a finite Frobenius ring.
\end{corollary}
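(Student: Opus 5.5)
Since $R_q$ is, by Theorem~\ref{thm:splitting-decomposition}, isomorphic to a finite direct product of copies of rings $\mathbb Z_{p_j^{e_j}}$, the proof reduces to two facts: each $\mathbb Z_{p^e}$ is Frobenius, and a finite direct product of Frobenius rings is Frobenius. For the definition we take a finite commutative ring $R$ to be Frobenius when $R/J(R)\cong \operatorname{soc}(R)$ as $R$-modules. Equivalently, for finite rings, $R$ admits a generating character (a character $\chi$ of $(R,+)$ whose kernel contains no nonzero ideal). We use the generating-character characterization, since it passes cleanly through products and fits the later section on faithful characters.

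\textbf{Step 1: local factors.} On $\mathbb Z_{p^e}$ define $\chi(x)=\exp(2\pi i x/p^e)$. Every nonzero ideal has the form $p^t\mathbb Z_{p^e}$ with $t<e$, so it contains $p^{e-1}$. We have $\chi(p^{e-1})=\exp(2\pi i/p)\neq 1$, so $\ker\chi$ contains no nonzero ideal, and $\chi$ is generating. Alternatively, one can note that $\mathbb Z_{p^e}$ is a chain ring with unique minimal ideal $p^{e-1}\mathbb Z_{p^e}\cong \mathbb F_p\cong \mathbb Z_{p^e}/p\mathbb Z_{p^e}$.

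\textbf{Step 2: products.} Let $R=\prod_k R_k$ with generating characters $\chi_k$, and set $\chi(x_1,\dots,x_m)=\prod_k\chi_k(x_k)$. Any ideal $I$ of $R$ is $\prod_k I_k$, because the idempotents $e_k$ lie in $R$ and $e_kI=I_k$. Suppose $I\subseteq\ker\chi$. Then for each $k$ the embedded copy $I_k=e_kI$ lies in $I$, so $\chi_k(I_k)=1$, which forces $I_k=0$. Hence $I=0$.

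Finally, we transport $\chi$ along the isomorphism $\Phi$: the composite $\chi\circ\Phi$ is a generating character of $R_q$, so $R_q$ is Frobenius. The only point needing care is the ideal decomposition in a product, and this is routine via the central idempotents.
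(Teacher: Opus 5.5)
Your proof is correct, but it takes a different route from the paper. The paper's proof is one line: it combines Corollary~\ref{cor:Rq-PIR} ($R_q$ is a finite commutative principal ideal ring) with the general fact, stated without justification, that every finite commutative principal ideal ring is Frobenius.

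You never use the principal-ideal property. Instead you:
\begin{enumerate}
\item[(a)] work through the splitting isomorphism $\Phi$;
\item[(b)] exhibit the generating character $x\mapsto\exp(2\pi i x/p^e)$ on each factor $\mathbb Z_{p^e}$, correctly using that every nonzero ideal contains $p^{e-1}$;
\item[(c)] show that a product of generating characters is generating on a finite product ring, where your idempotent decomposition $I=\bigoplus_k e_kI$ is exactly what is needed;
\item[(d)] pull the result back along $\Phi$.
\end{enumerate}
The conclusion then rests on Wood's characterization that a finite ring admitting a generating character is Frobenius.

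The paper's argument is shorter but places all the content in a cited structural fact. The standard proof of that fact is essentially your argument: decompose into finite chain rings, each of which has a simple socle. Your version is more self-contained. It also produces an explicit generating character, which is the property the paper actually needs later: in Section~\ref{sec:characters} and in the proof of Theorem~\ref{thm:X-syndrome-kernel}, the paper obtains it separately via $\chi(a+bi)=\exp\bigl(2\pi i(a+b)/q\bigr)$. Finally, your argument uses only that each factor admits a generating character, so it applies verbatim to any finite product of Frobenius rings, whether or not they are principal.
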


\begin{proof}
Every finite commutative principal ideal ring is a Frobenius
ring.
\end{proof}

\section{Principal Ideals, Annihilators, and Hermitian Duality}
\label{sec:duality}

Throughout this section, assume
$
q=\prod_{j=1}^{s}p_j^{e_j},
\,\,
p_j\equiv1\pmod4,
$
and let
$
R_q=\mathbb Z_q[i].
$
By Theorem~\ref{thm:splitting-decomposition},
\[
R_q
\cong
\prod_{j=1}^{s}
\left(
\mathbb Z_{p_j^{e_j}}
\times
\mathbb Z_{p_j^{e_j}}
\right).
\]

For $\alpha\in R_q$, denote the principal ideal generated by
$\alpha$ by
$
\langle\alpha\rangle
=
\{\alpha r:r\in R_q\},
$
and define its annihilator by
$
\Ann(\alpha)
=
\{r\in R_q:r\alpha=0\}.
$

\subsection{Principal ideals and annihilators}

The cardinality of a principal ideal is naturally determined by
the kernel of the corresponding multiplication map.

\begin{proposition}
\label{prop:ideal-ann-cardinality}
For every $\alpha\in R_q$,
\[
|\langle\alpha\rangle|
=
\frac{|R_q|}{|\Ann(\alpha)|}
=
\frac{q^2}{|\Ann(\alpha)|}.
\]
Equivalently,
$
|\langle\alpha\rangle|\,
|\Ann(\alpha)|
=
q^2.
$
\end{proposition}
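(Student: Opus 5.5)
The plan is to apply the first isomorphism theorem for abelian groups to the multiplication-by-$\alpha$ map. Define
\[
\mu_\alpha:R_q\longrightarrow R_q,\qquad \mu_\alpha(r)=\alpha r .
\]
By distributivity in the commutative ring $R_q$, $\mu_\alpha$ is a homomorphism of additive groups (indeed of $R_q$-modules). Its image is exactly $\{\alpha r:r\in R_q\}=\langle\alpha\rangle$, and its kernel is $\{r\in R_q: r\alpha=0\}=\Ann(\alpha)$; commutativity ensures $\alpha r=r\alpha$, so the two descriptions agree.

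The first isomorphism theorem then gives an isomorphism of additive groups
\[
R_q/\Ann(\alpha)\;\cong\;\langle\alpha\rangle .
\]
Since $R_q$ is finite, Lagrange's theorem yields $|\langle\alpha\rangle|=|R_q|/|\Ann(\alpha)|$. Substituting $|R_q|=q^2$, which was recorded in Section~\ref{sec:ring-structure}, gives the second equality, and clearing denominators gives the product form $|\langle\alpha\rangle|\,|\Ann(\alpha)|=q^2$.

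There is no genuine obstacle. The only points to check are that $\mu_\alpha$ is additive, which is distributivity, and that the kernel matches the stated definition of $\Ann(\alpha)$, which uses commutativity. Notably, the splitting hypothesis $p_j\equiv1\pmod 4$ is not needed here; the argument works for any finite commutative ring. The CRT decomposition $\Phi$ will only become relevant later, when one wants explicit valuation formulas for $|\Ann(\alpha)|$.
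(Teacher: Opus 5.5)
Your proof is correct and is essentially the paper's argument: the paper gives no separate proof, but its preceding remark that the cardinality "is naturally determined by the kernel of the corresponding multiplication map" refers to exactly your first-isomorphism-theorem argument for $r\mapsto\alpha r$, which gives $R_q/\Ann(\alpha)\cong\langle\alpha\rangle$. You are also right that the argument uses only finiteness and commutativity, not the splitting hypothesis.
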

This identity is valid without invoking the Gaussian norm.

\subsection{CRT description of principal ideals}

Let $\Phi(\alpha)
=
\bigl(
\alpha_{1,+},\alpha_{1,-},
\ldots,
\alpha_{s,+},\alpha_{s,-}
\bigr)$ be the CRT representation of $\alpha$ introduced in
Section~\ref{sec:ring-structure}.

For $x\in\mathbb Z_{p^e}$, define the truncated $p$-adic valuation
by
\[
\nu_p(x)
=
\begin{cases}
t,
&
x\in p^t\mathbb Z_{p^e}
\setminus
p^{t+1}\mathbb Z_{p^e},
\\[1ex]
e,
&
x=0.
\end{cases}
\]
Thus
\[
0\leq \nu_p(x)\leq e.
\]

For each $j$, put
\[
r_j=\nu_{p_j}(\alpha_{j,+}),
\qquad
s_j=\nu_{p_j}(\alpha_{j,-}).
\]

\begin{lemma}
\label{lem:ideal-Zpe}
Let $x\in\mathbb Z_{p^e}$ and let $
t=\nu_p(x).$
Then $
\langle x\rangle
=
p^t\mathbb Z_{p^e},
$
and
$
|\langle x\rangle|
=
p^{e-t}.
$
Moreover,
$
\Ann(x)
=
p^{e-t}\mathbb Z_{p^e},
$
with the conventions
$
p^0\mathbb Z_{p^e}=\mathbb Z_{p^e},
\,
p^e\mathbb Z_{p^e}=\{0\}.
$
Consequently,
$
|\Ann(x)|=p^t.
$
\end{lemma}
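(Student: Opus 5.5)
The plan is to write $x$ as a unit multiple of $p^t$ and then read off the ideal, its size, and the annihilator directly. The case $x=0$ (so $t=e$) is trivial: $\langle 0\rangle=\{0\}=p^e\mathbb Z_{p^e}$ has size $1=p^{0}$, and $\Ann(0)=\mathbb Z_{p^e}=p^0\mathbb Z_{p^e}$ has size $p^e$. So assume $x\neq 0$ with $t=\nu_p(x)<e$.

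\textbf{Step 1: factor $x$.} By definition of $\nu_p$, $x\in p^t\mathbb Z_{p^e}$, so $x=p^t u$ for some integer representative $u$. Since $x\notin p^{t+1}\mathbb Z_{p^e}$, $p\nmid u$, and hence $u$ is a unit in $\mathbb Z_{p^e}$.

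\textbf{Step 2: the ideal and its size.} From $x=p^tu$ with $u$ a unit we get $\langle x\rangle=\langle p^t\rangle=p^t\mathbb Z_{p^e}$. The map $\mathbb Z_{p^e}\to p^t\mathbb Z_{p^e}$, $r\mapsto p^tr$, is surjective with kernel $p^{e-t}\mathbb Z_{p^e}$, which has $p^t$ elements. Therefore $|p^t\mathbb Z_{p^e}|=p^{e}/p^{t}=p^{e-t}$.

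\textbf{Step 3: the annihilator.} Since $u$ is a unit, $rx=0$ if and only if $rp^t=0$ in $\mathbb Z_{p^e}$, that is, $p^e\mid rp^t$ as integers, that is, $p^{e-t}\mid r$. Hence $\Ann(x)=p^{e-t}\mathbb Z_{p^e}$. Applying the count from Step~2 with $e-t$ in place of $t$ gives $|\Ann(x)|=p^{e-(e-t)}=p^t$. As a consistency check, $|\langle x\rangle|\,|\Ann(x)|=p^e$, which is the analogue of Proposition~\ref{prop:ideal-ann-cardinality} for $\mathbb Z_{p^e}$.

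There is no real obstacle in this argument. The only care needed is that $\nu_p$ is well defined modulo $p^e$: divisibility of a representative by $p^t$ with $t\le e$ does not depend on which representative is chosen. The conventions for $t=0$ and $t=e$ must also be handled consistently, which the trivial case at the start does.
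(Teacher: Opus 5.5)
Your proof is correct and follows essentially the same route as the paper: write $x=p^tu$ with $u$ a unit, read off $\langle x\rangle=\langle p^t\rangle=p^t\mathbb Z_{p^e}$, reduce $rx=0$ to $rp^t=0$ (that is, $p^{e-t}\mid r$), and cover $x=0$ by the stated conventions. The one thing to tidy is the counting. Step~2 quietly assumes $|p^{e-t}\mathbb Z_{p^e}|=p^t$, and Step~3 then deduces that same fact from Step~2, which is circular. Avoid this by counting directly that $p^s\mathbb Z_{p^e}=\{kp^s:0\le k<p^{e-s}\}$ has $p^{e-s}$ elements.
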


\begin{proof}
Write
$
x=p^t u,
$
where $u$ is a unit when $t<e$. Hence
$
\langle x\rangle
=
\langle p^t\rangle
=
p^t\mathbb Z_{p^e},
$
which has $p^{e-t}$ elements.

An element $y\in\mathbb Z_{p^e}$ annihilates $x$ precisely when
$
yp^t u=0\pmod{p^e}.
$
Since $u$ is a unit, this is equivalent to
$
yp^t=0\pmod{p^e},
$
or
$
y\in p^{e-t}\mathbb Z_{p^e}.
$
The case $t=e$, corresponding to $x=0$, follows from the stated
conventions.
\end{proof}

We can now obtain the general cardinality formula.

\begin{theorem}[CRT--valuation cardinality formula]
\label{thm:crt-cardinality}
Let $\alpha\in R_q$ and write
\[
\Phi(\alpha)
=
\bigl(
\alpha_{1,+},\alpha_{1,-},
\ldots,\\
\alpha_{s,+},\alpha_{s,-}
\bigr).
\]
If
$
r_j=\nu_{p_j}(\alpha_{j,+}),
\,
s_j=\nu_{p_j}(\alpha_{j,-}),
$
then
\[
|\langle\alpha\rangle|
=
\prod_{j=1}^{s}
p_j^{\,2e_j-r_j-s_j}.
\]
Furthermore,
\[
|\Ann(\alpha)|
=
\prod_{j=1}^{s}
p_j^{\,r_j+s_j}.
\]
Hence
\[
|\langle\alpha\rangle|\,
|\Ann(\alpha)|
=
q^2.
\]
\end{theorem}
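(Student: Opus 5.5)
The plan is to transport everything through the CRT isomorphism $\Phi$ of Theorem~\ref{thm:splitting-decomposition} and reduce to the local computation in Lemma~\ref{lem:ideal-Zpe}. Because $\Phi$ is a ring isomorphism, it carries ideals to ideals and annihilators to annihilators, and it preserves cardinalities. Hence
\[
|\langle\alpha\rangle|=|\langle\Phi(\alpha)\rangle|,
\qquad
|\Ann(\alpha)|=|\Ann(\Phi(\alpha))|,
\]
where the right-hand sides are computed in the product ring $S=\prod_{j=1}^{s}\bigl(\mathbb Z_{p_j^{e_j}}\times\mathbb Z_{p_j^{e_j}}\bigr)$.

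The first key step is that in a finite direct product of commutative rings, principal ideals and annihilators decompose componentwise. Multiplication in $S$ is coordinatewise, so for $x=(x_1,\dots,x_m)\in S$ we have
\[
\langle x\rangle=\prod_k\langle x_k\rangle,
\qquad
\Ann(x)=\prod_k\Ann(x_k).
\]
The first identity holds because as $r=(r_k)$ ranges over $S$, each coordinate $r_k$ ranges independently over its factor. The second holds because $rx=0$ exactly when $r_kx_k=0$ for every $k$. Applying this with the $2s$ coordinates $\alpha_{j,\pm}$ gives
\[
|\langle\alpha\rangle|=\prod_{j=1}^{s}|\langle\alpha_{j,+}\rangle|\,|\langle\alpha_{j,-}\rangle|,
\]
and the analogous product formula for $|\Ann(\alpha)|$.

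Next, I invoke Lemma~\ref{lem:ideal-Zpe} in each factor $\mathbb Z_{p_j^{e_j}}$. With $r_j=\nu_{p_j}(\alpha_{j,+})$ and $s_j=\nu_{p_j}(\alpha_{j,-})$, the lemma gives $|\langle\alpha_{j,+}\rangle|=p_j^{e_j-r_j}$ and $|\langle\alpha_{j,-}\rangle|=p_j^{e_j-s_j}$. Multiplying yields the factor $p_j^{2e_j-r_j-s_j}$. Likewise $|\Ann(\alpha_{j,+})|\,|\Ann(\alpha_{j,-})|=p_j^{r_j+s_j}$. Taking the product over $j$ gives both displayed formulas. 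The final identity follows from
\[
\prod_j p_j^{2e_j}=\Bigl(\prod_j p_j^{e_j}\Bigr)^2=q^2,
\]
which is consistent with Proposition~\ref{prop:ideal-ann-cardinality}.

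There is no real obstacle here. The only points needing care are the degenerate coordinates $\alpha_{j,\pm}=0$, where the valuation is set to $e_j$; these are already handled by the conventions in Lemma~\ref{lem:ideal-Zpe}. One should also state explicitly that a ring isomorphism maps $\langle\alpha\rangle$ onto $\langle\Phi(\alpha)\rangle$ and $\Ann(\alpha)$ onto $\Ann(\Phi(\alpha))$, since this is what justifies the reduction to $S$.
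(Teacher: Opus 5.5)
Your proof is correct and follows the same route as the paper: transport through $\Phi$, decompose $\langle\alpha\rangle$ and $\Ann(\alpha)$ componentwise over the $2s$ factors, and apply Lemma~\ref{lem:ideal-Zpe} in each factor. You also state explicitly why principal ideals and annihilators split coordinatewise and why $\Phi$ preserves them, which the paper leaves implicit.
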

\begin{proof}
Under the CRT isomorphism,
\[
\langle\alpha\rangle
\cong
\prod_{j=1}^{s}
\left(
\langle\alpha_{j,+}\rangle
\times
\langle\alpha_{j,-}\rangle
\right).
\]
By Lemma~\ref{lem:ideal-Zpe},
\[
|\langle\alpha_{j,+}\rangle|
=
p_j^{e_j-r_j},
\qquad
|\langle\alpha_{j,-}\rangle|
=
p_j^{e_j-s_j}.
\]
Therefore
\[
|\langle\alpha\rangle|
=
\prod_{j=1}^{s}
p_j^{e_j-r_j}
p_j^{e_j-s_j}
=
\prod_{j=1}^{s}
p_j^{2e_j-r_j-s_j}.
\]

Similarly,
\[
\Ann(\alpha)
\cong
\prod_{j=1}^{s}
\left(
\Ann(\alpha_{j,+})
\times
\Ann(\alpha_{j,-})
\right),
\]
so
\[
|\Ann(\alpha)|
=
\prod_{j=1}^{s}
p_j^{r_j+s_j}.
\]
Multiplying the two expressions gives
\[
\prod_{j=1}^{s}p_j^{2e_j}
=
q^2.
\]
\end{proof}

The preceding theorem also gives an explicit description of the
annihilator.

\begin{corollary}
\label{cor:ann-crt}
Under the CRT decomposition,
\[
\Ann(\alpha)
\cong
\prod_{j=1}^{s}
\left(
p_j^{\,e_j-r_j}\mathbb Z_{p_j^{e_j}}
\times
p_j^{\,e_j-s_j}\mathbb Z_{p_j^{e_j}}
\right).
\]
\end{corollary}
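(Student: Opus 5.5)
The plan is to carry the annihilator condition through the ring isomorphism $\Phi$ of Theorem~\ref{thm:splitting-decomposition} and then apply Lemma~\ref{lem:ideal-Zpe} one coordinate at a time. Because $\Phi$ is a ring isomorphism, for $r\in R_q$ we have $r\alpha=0$ exactly when $\Phi(r)\Phi(\alpha)=0$ in $\prod_{j}(\mathbb Z_{p_j^{e_j}}\times\mathbb Z_{p_j^{e_j}})$. Multiplication in this product is componentwise, so the condition is
\[
r_{j,+}\alpha_{j,+}=0
\quad\text{and}\quad
r_{j,-}\alpha_{j,-}=0
\qquad\text{for all } j.
\]
It follows that
\[
\Phi(\Ann(\alpha))=\prod_{j=1}^{s}\bigl(\Ann(\alpha_{j,+})\times\Ann(\alpha_{j,-})\bigr).
\]
This is the identification that was already used, without separate justification, in the proof of Theorem~\ref{thm:crt-cardinality}. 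Here I would state it explicitly: the annihilator of an element of a finite product of rings is the product of the coordinate annihilators. The inclusion in each direction is immediate from the componentwise description above.

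Next, Lemma~\ref{lem:ideal-Zpe} gives each factor directly:
\[
\Ann(\alpha_{j,+})=p_j^{\,e_j-r_j}\mathbb Z_{p_j^{e_j}},
\qquad
\Ann(\alpha_{j,-})=p_j^{\,e_j-s_j}\mathbb Z_{p_j^{e_j}},
\]
where $r_j=\nu_{p_j}(\alpha_{j,+})$ and $s_j=\nu_{p_j}(\alpha_{j,-})$. Substituting these into the product gives the stated description of $\Ann(\alpha)$.

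The only step needing care is the boundary behaviour of the truncated valuation. If $\alpha_{j,\pm}=0$, the valuation is $e_j$, so the exponent $e_j-e_j=0$ and the factor is $p_j^0\mathbb Z_{p_j^{e_j}}=\mathbb Z_{p_j^{e_j}}$, the whole ring. If $\alpha_{j,\pm}$ is a unit, the valuation is $0$, so the exponent is $e_j$ and the factor is $\{0\}$. Both cases are covered by the conventions in Lemma~\ref{lem:ideal-Zpe}, so no separate case split is needed. As a consistency check, the resulting cardinality $\prod_j p_j^{r_j+s_j}$ agrees with Theorem~\ref{thm:crt-cardinality}.
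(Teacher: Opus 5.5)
Your proposal is correct and takes the same route as the paper. The paper gives no separate proof: the corollary is read off from the proof of Theorem~\ref{thm:crt-cardinality}, which identifies $\Ann(\alpha)$ under $\Phi$ with the product of coordinate annihilators and then applies Lemma~\ref{lem:ideal-Zpe} to each factor. Your explicit justification of that identification and of the boundary cases simply fills in what the paper leaves implicit.
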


\subsection{The Gaussian norm and its limitations}
Define a multiplicative map $N:R_q\rightarrow \mathbb{Z}_q$ given by $N(a+ib)=a^2+b^2\,(\text{mod } q)$. Note that $N(a+ib)=0$ does not imply $a+ib=0.$ The map $N$ is referred to as Gaussian norm in \cite{GUZELTEPE2027102881}. 
Since
\[
\alpha_{j,+}=a+b\omega_j,
\qquad
\alpha_{j,-}=a-b\omega_j,
\]
we have
\[
\alpha_{j,+}\alpha_{j,-}
=
a^2-b^2\omega_j^2
=
a^2+b^2
=
N(\alpha)
\pmod{p_j^{e_j}}.
\]

The following result identifies a useful situation in which the
usual norm quotient agrees with the CRT formula.

\begin{proposition}[Validity of the norm formula]
Let $\alpha=a+bi\in R_q$, where $a,b\in\mathbb Z$ are fixed
integral representatives. Suppose that
\[
N(\alpha)=a^2+b^2
   =\prod_{j=1}^{s}p_j^{t_j},
\qquad 0\leq t_j\leq e_j,
\]
and that
\[
t_j=r_j+s_j
\]
for every $j$. Then
\[
|\langle\alpha\rangle|
   =\frac{q^2}{N(\alpha)}.
\]
\end{proposition}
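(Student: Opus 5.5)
The plan is to reduce directly to Theorem~\ref{thm:crt-cardinality}, which gives
\[
|\langle\alpha\rangle|=\prod_{j=1}^{s}p_j^{\,2e_j-r_j-s_j}.
\]
Since $q^2=\prod_{j}p_j^{2e_j}$, this can be rewritten as
\[
|\langle\alpha\rangle|=\frac{q^2}{\prod_{j}p_j^{\,r_j+s_j}}.
\]
The hypothesis $t_j=r_j+s_j$ then turns the denominator into $\prod_j p_j^{t_j}$, which equals the integer $a^2+b^2=N(\alpha)$. That gives $|\langle\alpha\rangle|=q^2/N(\alpha)$, where $N(\alpha)$ is read as the positive integer $a^2+b^2$ and not as its residue modulo $q$.

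The key steps, in order, are as follows.

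\begin{enumerate}
\item Recall from Theorem~\ref{thm:crt-cardinality} that $|\Ann(\alpha)|=\prod_j p_j^{r_j+s_j}$ and $|\langle\alpha\rangle|\,|\Ann(\alpha)|=q^2$. Equivalently, use Proposition~\ref{prop:ideal-ann-cardinality}.
\item Substitute $r_j+s_j=t_j$ to get $|\Ann(\alpha)|=\prod_j p_j^{t_j}=a^2+b^2$.
\item Divide to conclude.
\end{enumerate}

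One should note that $0\le t_j\le e_j$ makes $N(\alpha)$ a divisor of $q$, so the quotient is an integer. Indeed it is at least $q$, consistent with $|\langle\alpha\rangle|\le q^2$.

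The only delicate point is bookkeeping rather than a genuine obstacle. The statement uses $N(\alpha)$ in two senses: as the residue in $\mathbb Z_q$ and as the integer $a^2+b^2$ attached to the chosen representatives. The proof must explicitly use the integral value. As a consistency remark, not needed for the proof, $\alpha_{j,+}\alpha_{j,-}\equiv a^2+b^2\pmod{p_j^{e_j}}$ gives $\nu_{p_j}(N(\alpha))\ge \min(e_j,r_j+s_j)$. This explains why $t_j=r_j+s_j$ is the natural compatibility condition, although equality can fail when $r_j+s_j>e_j$, which is why it is imposed as a hypothesis. Since the hypothesis is assumed outright, no valuation argument beyond Theorem~\ref{thm:crt-cardinality} is required.
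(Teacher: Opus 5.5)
Your proof is correct and matches the paper's argument: the paper also substitutes $t_j=r_j+s_j$ into the CRT--valuation formula $|\langle\alpha\rangle|=\prod_j p_j^{2e_j-r_j-s_j}$ and divides $q^2=\prod_j p_j^{2e_j}$ by $N(\alpha)=\prod_j p_j^{r_j+s_j}$. You are right to insist that $N(\alpha)$ be read as the integer $a^2+b^2$ rather than its residue modulo $q$, which the paper leaves implicit; for example, when every $t_j=e_j$ the residue is $0$ and the quotient would be meaningless.
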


\begin{proof}
Under the stated assumptions,
\[
N(\alpha)
=
\prod_{j=1}^{s}
p_j^{r_j+s_j}.
\]
By Theorem~\ref{thm:crt-cardinality},
\[
|\langle\alpha\rangle|
=
\prod_{j=1}^{s}
p_j^{2e_j-r_j-s_j}.
\]
Therefore
\[
|\langle\alpha\rangle|
=
\frac{
\prod_{j=1}^{s}p_j^{2e_j}
}{
\prod_{j=1}^{s}p_j^{r_j+s_j}
}
=
\frac{q^2}{N(\alpha)}.
\]
\end{proof}

\begin{remark}
\label{rem:norm-warning}
The expression
$
\frac{q^2}{N(\alpha)}
$
is therefore not a general cardinality formula for principal
ideals of $R_q$.  In particular, if
$
v_{p_j}(N(\alpha))>e_j
$
for some $j$, then the Gaussian norm records more $p_j$-adic
divisibility than is visible in the quotient ring
$\mathbb Z_{p_j^{e_j}}$.

\end{remark}

\subsection{Hermitian duality}

For vectors
$
x=(x_1,\ldots,x_n),\,\,
y=(y_1,\ldots,y_n)
\in R_q^n,
$
define the Hermitian inner product by
\[
\langle x,y\rangle_H
=
\sum_{\ell=1}^{n}
x_\ell\sigma(y_\ell).
\]

For an $R_q$-linear code
$
C\subseteq R_q^n,
$
its Hermitian dual is
\[
C^{\perp_H}
=
\left\{
x\in R_q^n:
\langle x,c\rangle_H=0
\text{ for every }c\in C
\right\}.
\]

We first determine the dual of a length-one principal ideal.

\begin{theorem}[Hermitian dual]
\label{thm:principal-hermitian-dual}
Let
$
C=\langle\alpha\rangle\subseteq R_q.
$
Then
$
C^{\perp_H}
=
\Ann(\sigma(\alpha)).
$
\end{theorem}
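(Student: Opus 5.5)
We prove the two inclusions directly from the definition of the Hermitian inner product in length $n=1$. Here $\langle x,c\rangle_H=x\,\sigma(c)$ for $x,c\in R_q$. So $x\in C^{\perp_H}$ exactly when $x\,\sigma(\alpha r)=0$ for every $r\in R_q$.

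Since $\sigma$ is a ring automorphism (Section~\ref{sec:ring-structure}), we have $\sigma(\alpha r)=\sigma(\alpha)\sigma(r)$. Moreover, $\sigma(r)$ ranges over all of $R_q$ as $r$ does, because $\sigma$ is an involution and hence bijective. Therefore
\[
C^{\perp_H}=\{x\in R_q:\ x\,\sigma(\alpha)\,t=0\ \text{for all } t\in R_q\}.
\]

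The two inclusions now follow quickly.
\begin{itemize}
\item[] \emph{First inclusion.} If $x\sigma(\alpha)=0$, then $x\sigma(\alpha)t=0$ for every $t$, so $\Ann(\sigma(\alpha))\subseteq C^{\perp_H}$.
\item[] \emph{Second inclusion.} Conversely, taking $t=1$, i.e.\ $r=1$ so that $c=\alpha\in C$, gives $x\sigma(\alpha)=0$. Hence $C^{\perp_H}\subseteq\Ann(\sigma(\alpha))$.
\end{itemize}
Commutativity of $R_q$ is used implicitly, so that $\Ann$ is unambiguous and the order of the factors $x\sigma(\alpha)t$ is irrelevant.

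There is no real obstacle. The only point requiring care is the surjectivity of $\sigma$, which lets us replace $\sigma(r)$ by an arbitrary ring element; the involution property $\sigma^2=\mathrm{id}$ supplies it. The splitting hypothesis on $q$ is not needed for this statement.

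Optionally, as a consistency check, we would combine the result with Proposition~\ref{prop:conjugation-crt} and Corollary~\ref{cor:ann-crt}. Since $\sigma$ swaps $r_j$ and $s_j$, this gives
\[
C^{\perp_H}\cong\prod_j\bigl(p_j^{e_j-s_j}\mathbb Z_{p_j^{e_j}}\times p_j^{e_j-r_j}\mathbb Z_{p_j^{e_j}}\bigr).
\]
In particular $|C|\,|C^{\perp_H}|=q^2$, by Theorem~\ref{thm:crt-cardinality} applied to $\sigma(\alpha)$, as expected over a Frobenius ring.
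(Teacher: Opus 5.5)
Your proof is correct and is essentially the paper's argument: expand $\langle x, r\alpha\rangle_H = x\,\sigma(r)\sigma(\alpha)$, note that $x\sigma(\alpha)=0$ forces this to vanish for every $r$, and take $r=1$ for the converse. The surjectivity of $\sigma$ that you single out as the delicate point is not actually needed, since neither inclusion requires $\sigma(r)$ to range over all of $R_q$.
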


\begin{proof}
Let $x\in R_q$. Then
$
x\in C^{\perp_H}
$
if and only if
$
\langle x,c\rangle_H=0\,\,
\text{for every }c\in C.
$
Every $c\in C$ has the form
$
c=r\alpha
$
for some $r\in R_q$. Hence
$
\langle x,r\alpha\rangle_H
=
x\,\sigma(r\alpha)
=
x\,\sigma(r)\sigma(\alpha).
$

If
$
x\sigma(\alpha)=0,
$
then the above expression is zero for every $r\in R_q$.
Conversely, if $x\in C^{\perp_H}$, choosing $r=1$ gives
$
x\sigma(\alpha)=0.
$
Therefore
\[
x\in C^{\perp_H}
\iff
x\in\Ann(\sigma(\alpha)),
\]
which proves
\[
C^{\perp_H}
=
\Ann(\sigma(\alpha)).
\]
\end{proof}
\begin{corollary}
\label{cor:dual-cardinality-principal}
If
$
C=\langle\alpha\rangle\subseteq R_q,
$
then
$
|C|\,|C^{\perp_H}|=q^2.
$
In particular,
$
|C^{\perp_H}|
=
\frac{q^2}{|C|}.
$
\end{corollary}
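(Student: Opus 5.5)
The plan is to combine Theorem~\ref{thm:principal-hermitian-dual} with Proposition~\ref{prop:ideal-ann-cardinality}, applied to $\sigma(\alpha)$ rather than to $\alpha$. By Theorem~\ref{thm:principal-hermitian-dual} we have $C^{\perp_H}=\Ann(\sigma(\alpha))$, so it suffices to show that
\[
|\langle\alpha\rangle|\,|\Ann(\sigma(\alpha))|=q^2 .
\]

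First, Proposition~\ref{prop:ideal-ann-cardinality} applied to the element $\sigma(\alpha)$ gives $|\langle\sigma(\alpha)\rangle|\,|\Ann(\sigma(\alpha))|=q^2$. It therefore remains to check that $|\langle\sigma(\alpha)\rangle|=|\langle\alpha\rangle|$. Since $\sigma$ is a ring automorphism of $R_q$, we have $\sigma(\langle\alpha\rangle)=\{\sigma(r)\sigma(\alpha):r\in R_q\}=\langle\sigma(\alpha)\rangle$, because $\sigma(r)$ ranges over all of $R_q$. As $\sigma$ is a bijection, the two ideals have the same cardinality. Combining these facts yields $|C|\,|C^{\perp_H}|=q^2$, and dividing by $|C|$ (which is nonzero, since $0\in C$) gives the second formula.

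As a cross-check, one could argue through Theorem~\ref{thm:crt-cardinality} and Proposition~\ref{prop:conjugation-crt} instead. Conjugation swaps $\alpha_{j,+}$ and $\alpha_{j,-}$, so it interchanges $r_j$ and $s_j$. Hence $|\Ann(\sigma(\alpha))|=\prod_j p_j^{s_j+r_j}=|\Ann(\alpha)|$, and the identity follows from the CRT formula. I do not expect any real obstacle here. The only point needing care is not conflating $\Ann(\sigma(\alpha))$ with $\Ann(\alpha)$, and the automorphism argument settles that.
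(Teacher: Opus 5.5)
Your proposal is correct and follows essentially the same route as the paper. It identifies $C^{\perp_H}=\Ann(\sigma(\alpha))$ via Theorem~\ref{thm:principal-hermitian-dual}, uses that the automorphism $\sigma$ carries $\langle\alpha\rangle$ onto $\langle\sigma(\alpha)\rangle$, and applies Proposition~\ref{prop:ideal-ann-cardinality} to $\sigma(\alpha)$; your CRT cross-check, with conjugation interchanging $r_j$ and $s_j$, is also valid and mirrors Corollary~\ref{cor:dual-crt}.
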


\begin{proof}
By Theorem~\ref{thm:principal-hermitian-dual},
$
C^{\perp_H}
=
\Ann(\sigma(\alpha)).
$
Since $\sigma$ is an automorphism,
$
|\langle\sigma(\alpha)\rangle|
=
|\langle\alpha\rangle|.
$
Applying Proposition~\ref{prop:ideal-ann-cardinality} to
$\sigma(\alpha)$ gives
$
|\langle\sigma(\alpha)\rangle|
\,|\Ann(\sigma(\alpha))|
=
q^2.
$
Thus
\[
|C|\,|C^{\perp_H}|=q^2.
\]
\end{proof}
The CRT decomposition gives an explicit description of the
Hermitian dual.
\begin{corollary}
\label{cor:dual-crt}
Suppose
$
\Phi(\alpha)
=
\bigl(
\alpha_{1,+},\alpha_{1,-},
\ldots,
\alpha_{s,+},\alpha_{s,-}
\bigr)
$
with
\[
r_j=\nu_{p_j}(\alpha_{j,+}),
\qquad
s_j=\nu_{p_j}(\alpha_{j,-}).
\]
Then
$
C^{\perp_H}
=
\Ann(\sigma(\alpha))
$
corresponds under $\Phi$ to
$
\prod_{j=1}^{s}
\left(
p_j^{\,e_j-s_j}\mathbb Z_{p_j^{e_j}}
\times
p_j^{\,e_j-r_j}\mathbb Z_{p_j^{e_j}}
\right).
$
Consequently,
$
|C^{\perp_H}|
=
\prod_{j=1}^{s}p_j^{r_j+s_j}.
$
\end{corollary}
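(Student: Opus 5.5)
The plan is to combine Theorem~\ref{thm:principal-hermitian-dual} with the CRT description of annihilators in Corollary~\ref{cor:ann-crt}, applied to $\sigma(\alpha)$ rather than to $\alpha$. By Theorem~\ref{thm:principal-hermitian-dual}, $C^{\perp_H}=\Ann(\sigma(\alpha))$. So it suffices to compute the CRT image of $\Ann(\sigma(\alpha))$ and then read off its cardinality.

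The first step is to compute the CRT coordinates of $\sigma(\alpha)$. By Proposition~\ref{prop:conjugation-crt}, $\Phi(\sigma(\alpha))=(\alpha_{1,-},\alpha_{1,+},\ldots,\alpha_{s,-},\alpha_{s,+})$. Hence the valuation data of $\sigma(\alpha)$ in the $j$th pair are $(s_j,r_j)$, that is, the pair $(r_j,s_j)$ with its entries interchanged.

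The second step uses the fact that $\Phi$ is a ring isomorphism. An element $x$ satisfies $x\sigma(\alpha)=0$ if and only if each coordinate of $\Phi(x)$ annihilates the corresponding coordinate of $\Phi(\sigma(\alpha))$. Therefore $\Phi(\Ann(\sigma(\alpha)))=\prod_j\bigl(\Ann(\alpha_{j,-})\times\Ann(\alpha_{j,+})\bigr)$. Applying Lemma~\ref{lem:ideal-Zpe} coordinatewise gives $\Ann(\alpha_{j,-})=p_j^{e_j-s_j}\mathbb Z_{p_j^{e_j}}$ and $\Ann(\alpha_{j,+})=p_j^{e_j-r_j}\mathbb Z_{p_j^{e_j}}$, which is the claimed product. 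Equivalently, one can invoke Corollary~\ref{cor:ann-crt} with $\sigma(\alpha)$ in place of $\alpha$, replacing $(r_j,s_j)$ by $(s_j,r_j)$.

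Finally, the cardinality follows from $|p^{e-t}\mathbb Z_{p^e}|=p^{t}$. This gives $|C^{\perp_H}|=\prod_j p_j^{s_j}p_j^{r_j}=\prod_j p_j^{r_j+s_j}$. As a consistency check, this agrees with Corollary~\ref{cor:dual-cardinality-principal} combined with Theorem~\ref{thm:crt-cardinality}.

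There is no real obstacle here. The only point needing care is the bookkeeping of the swap: the annihilator of $\sigma(\alpha)$ has its $+$ component governed by $s_j$ rather than $r_j$. One should also check that the boundary conventions $p^0\mathbb Z_{p^e}=\mathbb Z_{p^e}$ and $p^e\mathbb Z_{p^e}=\{0\}$ handle zero and unit coordinates correctly, and Lemma~\ref{lem:ideal-Zpe} already covers this.
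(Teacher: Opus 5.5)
Your proposal is correct and follows the paper's proof essentially verbatim: Proposition~\ref{prop:conjugation-crt} shows that $\sigma$ swaps the two CRT components in each pair, and you then apply Corollary~\ref{cor:ann-crt} (equivalently Lemma~\ref{lem:ideal-Zpe} coordinatewise) to $\sigma(\alpha)$ and read off the cardinality. Your remarks on the boundary conventions and the consistency check against Corollary~\ref{cor:dual-cardinality-principal} are sound but not needed for the argument.
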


\begin{proof}
By Proposition~\ref{prop:conjugation-crt},
conjugation interchanges the two CRT components:
\[
\Phi(\sigma(\alpha))
=
\bigl(
\alpha_{1,-},\alpha_{1,+},
\ldots,
\alpha_{s,-},\alpha_{s,+}
\bigr).
\]
Applying Corollary~\ref{cor:ann-crt} to $\sigma(\alpha)$ yields the
result.
\end{proof}
\begin{proposition}[Hermitian duality]
\label{prop:hermitian-biduality}
Let $C\subseteq R_q^n$ be an $R_q$-linear code. Then
$
|C|\,|C^{\perp_H}|=|R_q|^n
$
and
$
(C^{\perp_H})^{\perp_H}=C.
$
\end{proposition}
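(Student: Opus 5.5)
The plan is to reduce both claims to the CRT splitting of Theorem~\ref{thm:splitting-decomposition}. Because $\Phi$ is a ring isomorphism, it carries $R_q^n$ onto $\prod_{j=1}^{s}\bigl(\mathbb Z_{p_j^{e_j}}^n\times\mathbb Z_{p_j^{e_j}}^n\bigr)$. Every $R_q$-submodule $C$ then splits as $C\cong\prod_j (C_{j,+}\times C_{j,-})$, where $C_{j,\pm}\subseteq\mathbb Z_{p_j^{e_j}}^n$ is the image of $C$ under multiplication by the corresponding primitive idempotent. In each component the Hermitian form is $\sum_\ell x_\ell\sigma(y_\ell)$. By Proposition~\ref{prop:conjugation-crt} it becomes, in the $j$th pair,
\[
\bigl((x_+,x_-),(y_+,y_-)\bigr)\longmapsto\bigl(x_+\cdot y_-,\ x_-\cdot y_+\bigr),
\]
where $\cdot$ is the ordinary Euclidean dot product on $\mathbb Z_{p_j^{e_j}}^n$. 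It follows that $x\in C^{\perp_H}$ if and only if, for every $j$, $x_{j,+}\in C_{j,-}^{\perp}$ and $x_{j,-}\in C_{j,+}^{\perp}$, with $\perp$ the Euclidean dual over $\mathbb Z_{p^e}$. To justify this "if and only if", I will use that $C$ is a product: I vary the $+$ and $-$ parts of $c$ independently, using the idempotents $e_{j,\pm}\in R_q$ and the fact that $e_{j,\pm}c\in C$. This gives
\[
\Phi(C^{\perp_H})=\prod_{j=1}^{s}\bigl(C_{j,-}^{\perp}\times C_{j,+}^{\perp}\bigr),
\]
which is the vector analogue of Corollary~\ref{cor:dual-crt}.

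The problem is now reduced to the classical facts over $\mathbb Z_{p^e}$: for a submodule $D\subseteq\mathbb Z_{p^e}^n$ we have $|D|\,|D^\perp|=p^{en}$ and $(D^\perp)^\perp=D$. I will prove these by character theory. The map $\chi(a)=\zeta^{a}$, with $\zeta$ a primitive $p^e$-th root of unity, is a generating character of $\mathbb Z_{p^e}$. Hence $y\mapsto\chi(\,\cdot\,\cdot y)$ is an isomorphism $\mathbb Z_{p^e}^n\to\widehat{\mathbb Z_{p^e}^n}$. Under this isomorphism $D^\perp$ corresponds exactly to the characters trivial on $D$. For this step I use linearity: if $\chi(d\cdot y)=1$ for all $d\in D$, then applying it to $rd$ forces the ideal generated by $d\cdot y$ to lie in $\ker\chi=0$. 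Therefore $|D^\perp|=|\widehat{\mathbb Z_{p^e}^n/D}|=p^{en}/|D|$. Biduality follows because $D\subseteq (D^\perp)^\perp$ is clear and the two sides have equal cardinality by applying the count twice.

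Multiplying over all components gives
\[
|C|\,|C^{\perp_H}|=\prod_j |C_{j,+}||C_{j,-}||C_{j,-}^\perp||C_{j,+}^\perp|=\prod_j p_j^{4e_jn}=q^{2n}=|R_q|^n.
\]
Applying the description of $\Phi(C^{\perp_H})$ twice swaps the components back and returns $\prod_j\bigl((C_{j,+}^{\perp})^{\perp}\times (C_{j,-}^{\perp})^{\perp}\bigr)=\Phi(C)$, so $(C^{\perp_H})^{\perp_H}=C$.

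The main obstacle is the middle step. It requires making rigorous that the Hermitian orthogonality conditions decouple componentwise, with the swap, and it requires the counting lemma over $\mathbb Z_{p^e}$, whose dual is not handled by the vector-space dimension count. As a fallback, the character argument can be run directly on $R_q$. Corollary~\ref{cor:Rq-Frobenius} supplies a generating character $\chi$, and $\sigma$ is a bijection, so $y\mapsto\chi(\langle\cdot,y\rangle_H)$ identifies $R_q^n$ with its character group and $C^{\perp_H}$ with the annihilator of $C$. This route avoids the CRT bookkeeping.
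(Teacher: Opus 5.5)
Your proof is correct, but it takes a different route from the paper. The paper's proof is a two-line citation. By Wood's theorem, Euclidean duality over the finite Frobenius ring $R_q$ satisfies $|D|\,|D^{\perp}|=|R_q|^n$ and $(D^{\perp})^{\perp}=D$. This transfers to the Hermitian form because $\sigma$ is a ring automorphism, implicitly via $C^{\perp_H}=\sigma(C)^{\perp}$ and $\sigma(D^{\perp})=\sigma(D)^{\perp}$.

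You instead use the CRT splitting to reduce to Euclidean duality over the chain rings $\mathbb Z_{p_j^{e_j}}$. There the Hermitian twist becomes an explicit component swap, $\Phi(C^{\perp_H})=\prod_j\bigl(C_{j,-}^{\perp}\times C_{j,+}^{\perp}\bigr)$, and you prove the $\mathbb Z_{p^e}$ case from scratch with characters. This is self-contained and gives more information: it is the length-$n$ analogue of Corollary~\ref{cor:dual-crt} and shows exactly how $\sigma$ acts on duals. However, it depends on the splitting hypothesis $p_j\equiv1\pmod 4$. The paper's route, and your fallback, only use that $R_q$ is Frobenius; the fallback is essentially Wood's generating-character argument run directly on $R_q$ with $\sigma$ absorbed into the pairing.

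There is one arithmetic slip in your final count. The $j$th factor is $\bigl(|C_{j,+}|\,|C_{j,+}^{\perp}|\bigr)\bigl(|C_{j,-}|\,|C_{j,-}^{\perp}|\bigr)=p_j^{e_jn}\cdot p_j^{e_jn}=p_j^{2e_jn}$, not $p_j^{4e_jn}$. With the correct exponent the product is indeed $q^{2n}=|R_q|^n$.
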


\begin{proof}
Since $R_q$ is a finite Frobenius ring, Euclidean duality satisfies
the standard cardinality and biduality relations \cite{wood1999duality}. Since $\sigma$
is an automorphism of $R_q$, the same relations hold for the
Hermitian dual.
\end{proof}
\section{Non-degenerate Characters}
\label{sec:characters}

To define analogues of Pauli operators to get a CSS-type
construction, we require a suitable additive character of
$
R_q=\mathbb Z_q[i].
$
For $\lambda,\mu\in\mathbb Z_q$ the map  $T_{\lambda,\mu}(a+bi)=\lambda a+\mu b$ is a $\mathbb Z_q$-linear map from
$R_q$ to $\mathbb Z_q$. It induces the additive character $\chi_{\lambda,\mu}:R_q\rightarrow \mathbb{C}^{\times}$
\[
\chi_{\lambda,\mu}(x)
=
\exp\left(
\frac{2\pi i}{q}T_{\lambda,\mu}(x)
\right).
\]
Thus, group of characters of $R_q$ is $\widehat{R_q}=\{\chi_{\lambda, \mu} : \lambda, \mu\in\mathbb{Z}_q\}.$ Moreover, $\widehat{R_q}$ is an $R_q$-module with the following action: $r\in R_q$ and $\chi\in\widehat{R_q}$, $r\cdot\chi(x)=\chi(rx)$.

\begin{proposition}
\label{prop:generating-character}Let $\lambda,\mu\in \mathbb{Z}_q$. The following statements are equivalent:
\begin{enumerate}
 \item The bilinear form 
$
B_{\lambda,\mu}:R_q\times R_q\rightarrow Z_q
$ given by $B_{\lambda,\mu}(x,y)=T_{\lambda,\mu}(xy)$ is non-degenerate.
 \item $\lambda^2+\mu^2$ is a unit in $\mathbb{Z}_q.$
  \item The character $\chi_{\lambda,\mu}$ is a generator of $\widehat{R_q}.$
\end{enumerate}

\end{proposition}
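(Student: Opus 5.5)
The plan is to route all three conditions through the $\mathbb Z_q$-linear map $T_{\lambda,\mu}$. Concretely, I will compute the Gram matrix of $B_{\lambda,\mu}$ with respect to the $\mathbb Z_q$-basis $\{1,i\}$ of $R_q$. Using $i^2=-1$ gives $T(1)=\lambda$, $T(i)=\mu$ and $T(-1)=-\lambda$, so the Gram matrix is
\[
G=\begin{pmatrix}\lambda & \mu\\ \mu & -\lambda\end{pmatrix},\qquad \det G=-(\lambda^2+\mu^2).
\]
Non-degeneracy of $B_{\lambda,\mu}$ means that $x\mapsto B_{\lambda,\mu}(x,\cdot)$ is injective from $R_q$ into $\operatorname{Hom}_{\mathbb Z_q}(R_q,\mathbb Z_q)$. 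Both sides have $q^2$ elements, so injectivity is the same as bijectivity. Bijectivity holds exactly when $G$ is invertible over $\mathbb Z_q$, i.e.\ when $\det G$ is a unit. This is the standard fact that a square matrix over a commutative ring is invertible if and only if its determinant is a unit. For the injectivity direction I will argue modulo each prime $p\mid q$: if $p$ divides $\det G$, then $G$ has a nonzero kernel vector mod $p$, and scaling it by $q/p$ gives a nonzero kernel vector over $\mathbb Z_q$. This proves (1)$\iff$(2).

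For (1)$\iff$(3), I will first use that $\widehat{R_q}=\{\chi_{\lambda',\mu'}\}$ has exactly $q^2=|R_q|$ elements, since the pairs $(\lambda',\mu')$ give distinct characters. Next I will note that $(r\cdot\chi_{\lambda,\mu})(y)=\exp\!\left(\frac{2\pi i}{q}B_{\lambda,\mu}(r,y)\right)$. Hence $r\cdot\chi_{\lambda,\mu}$ is the trivial character if and only if $B_{\lambda,\mu}(r,\cdot)=0$ as a $\mathbb Z_q$-valued map; here I use that $\exp(2\pi i t/q)=1$ iff $t\equiv0$. It follows that the map $r\mapsto r\cdot\chi_{\lambda,\mu}$ is injective precisely when $B_{\lambda,\mu}$ is non-degenerate. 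Since $|R_q|=|\widehat{R_q}|$, this map is surjective exactly when it is injective, and surjectivity is the statement that $\chi_{\lambda,\mu}$ generates $\widehat{R_q}$.

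The step I expect to need the most care is the meaning of ``non-degenerate'' over a ring that is not a field. I will fix the definition as $B(x,y)=0$ for all $y$ implying $x=0$. By symmetry of $B_{\lambda,\mu}$, which follows from commutativity of $R_q$, left and right non-degeneracy then coincide. With this definition, the only input needed for the determinant criterion is the prime-by-prime reduction described above. The splitting hypothesis $p_j\equiv 1\pmod 4$ is not required for this proposition.
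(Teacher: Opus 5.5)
Your proof is correct. For (1)$\iff$(2) you follow the paper: both compute the Gram matrix $M_{\lambda,\mu}$ in the basis $\{1,i\}$ and find determinant $-(\lambda^2+\mu^2)$. The paper simply asserts that non-degeneracy is equivalent to the determinant being a unit. You actually justify this over $\mathbb Z_q$: injectivity equals bijectivity by counting, and if a prime $p\mid q$ divides $\lambda^2+\mu^2$, you reduce modulo $p$ and scale by $q/p$ to get a nonzero vector in the radical.

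For the character condition the two routes genuinely differ. The paper proves only (2)$\Rightarrow$(3), and does so constructively. It inverts $M_{\lambda,\mu}$ to write an arbitrary $\chi_{\alpha,\beta}$ as $(s+it)\chi_{\lambda,\mu}$, with $s,t$ obtained by dividing by $\lambda^2+\mu^2$. The reverse implication (3)$\Rightarrow$(2) is never argued there.

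You instead prove (1)$\iff$(3) directly. You observe that $r\mapsto r\cdot\chi_{\lambda,\mu}$ is a homomorphism $R_q\to\widehat{R_q}$ whose kernel is exactly the radical $\{r: B_{\lambda,\mu}(r,\cdot)=0\}$. Then $|R_q|=|\widehat{R_q}|=q^2$ lets you trade injectivity for surjectivity.

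Each route buys something different. Your argument is non-constructive: it does not exhibit the multiplier $r$ for a given target character, which the paper's formula does. In exchange, it yields both directions at once, including the one the paper omits. It also does not depend on the particular shape of $T_{\lambda,\mu}$. The same reasoning shows that for any $\mathbb Z_q$-linear $T$ on $R_q$, the character $e^{2\pi i T(\cdot)/q}$ generates $\widehat{R_q}$ if and only if $(x,y)\mapsto T(xy)$ is non-degenerate.

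Your remark that the splitting hypothesis is not needed for this proposition is also right.
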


\begin{proof}
Let
$
x=a+bi,
\,\,
y=c+di.
$
Since
$
xy=(ac-bd)+(ad+bc)i,
$
we obtain
\[
T_{\lambda,\mu}(xy)
=
a(\lambda c+\mu d)
+
b(\mu c-\lambda d).
\]
Relative to the basis $\{1,i\}$, the matrix of $B_{\lambda,\mu}$ is
\[
M_{\lambda,\mu}
=
\begin{pmatrix}
\lambda & \mu\\
\mu & -\lambda
\end{pmatrix},
\]
with
\[
\det(M_{\lambda,\mu})
=
-(\lambda^2+\mu^2).
\]
Hence the bilinear form is non-degenerate precisely when
$\lambda^2+\mu^2$ is a unit in $\mathbb Z_q$. Thus $(1)\iff(2).$

For the equivalence of $(2)$ and $(3)$ observe that any $\chi_{\alpha,\beta}\in \widehat{R_q}$ can be written as $\chi_{\alpha,\beta}=(s+it)\chi_{\lambda,\mu}$, for $s=\frac{\lambda a +\mu b}{\lambda^2+\mu^2} , t=\frac{\mu a-\lambda b}{\lambda^2+\mu^2}$.
\end{proof}

Since $q$ is odd,
$
B_{1,1}$
is non-degenerate, since
$
1^2+1^2=2
$ is a unit in $Z_q$.
Henceforth, we fix the generating character
\[
\chi(a+bi)
=
\exp\left(
\frac{2\pi i}{q}(a+b)
\right).
\]
The generating property implies
$
\chi(rx)=1,\text{ for every }r\in R_q
\,\Longrightarrow\,
x=0.
$
This implication will be used in Section~\ref{sec:css} to determine
the kernel of the $X$-error syndrome map.

For $v,x\in R_q^n$, we write
\[
\chi_v(x)
=
\chi\bigl(\langle v,x\rangle_H\bigr).
\]
The non-degeneracy of $\chi$ implies
$
\chi\bigl(\langle v,x\rangle_H\bigr)=1
\,\text{for every }x\in R_q^n
\,\,\Longrightarrow\,\,
v=0.
$
\section{Nested Codes and Quotient Structure}
\label{sec:nested-quotient}
Let
$
C_1\subseteq C_2\subseteq R_q^n
$
be $R_q$-linear codes. A subset
$
\varepsilon\subseteq C_2
$
containing exactly one representative from each coset of $C_1$ in
$C_2$ is called a \emph{transversal of $C_1$ in $C_2$}. Thus
\[
C_2
=
\bigsqcup_{e\in\varepsilon}(C_1+e),
\]
and
\[
|\varepsilon|
=
[C_2:C_1]
=
\frac{|C_2|}{|C_1|}.
\]

The canonical quotient map
\[
\pi_{C_1}:C_2\longrightarrow C_2/C_1,
\qquad
\pi_{C_1}(x)=x+C_1,
\]
has kernel
$
\ker(\pi_{C_1})=C_1.
$
Hence
$
\pi_{C_1}(x)=\pi_{C_1}(y)
\iff
x-y\in C_1.
$
Moreover, if $\varepsilon$ is a transversal, then
$\pi_{C_1}|_{\varepsilon}$ gives a bijection between
$\varepsilon$ and $C_2/C_1$.






\subsection{Nested principal ideals}

For principal ideal codes, the quotient index can be
expressed directly in terms of annihilator cardinalities.

\begin{proposition}
\label{prop:index-annihilator}
Let
$
C_1=\langle\alpha_1\rangle
\subseteq
C_2=\langle\alpha_2\rangle
\subseteq R_q.
$
Then
$
[C_2:C_1]
=
\frac{|\Ann(\alpha_1)|}
     {|\Ann(\alpha_2)|}.
$
\end{proposition}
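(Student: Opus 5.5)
The plan is to combine Lagrange's theorem for the additive groups with Proposition~\ref{prop:ideal-ann-cardinality}. Since $C_1\subseteq C_2$ are both additive subgroups of $R_q$, the index is $[C_2:C_1]=|C_2|/|C_1|$, as recorded at the start of this section. So it suffices to express $|C_1|$ and $|C_2|$ through annihilators and take the ratio.

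Next, apply Proposition~\ref{prop:ideal-ann-cardinality} to $\alpha_1$ and to $\alpha_2$. This gives
\[
|C_1|=|\langle\alpha_1\rangle|=\frac{q^2}{|\Ann(\alpha_1)|},
\qquad
|C_2|=|\langle\alpha_2\rangle|=\frac{q^2}{|\Ann(\alpha_2)|}.
\]
Dividing the second by the first makes the factor $q^2$ cancel, and yields $[C_2:C_1]=|\Ann(\alpha_1)|/|\Ann(\alpha_2)|$.

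If Proposition~\ref{prop:ideal-ann-cardinality} is to be justified along the way, I would use the $R_q$-module map $m_\alpha:R_q\to R_q$, $r\mapsto r\alpha$. Its image is $\langle\alpha\rangle$ and its kernel is $\Ann(\alpha)$, so the first isomorphism theorem gives $|\langle\alpha\rangle|\,|\Ann(\alpha)|=|R_q|=q^2$. Alternatively, one can read this off from Theorem~\ref{thm:crt-cardinality}.

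There is no real obstacle here. The only point to check is that the containment $C_1\subseteq C_2$ makes the index well defined, which is what forces $|\Ann(\alpha_2)|$ to divide $|\Ann(\alpha_1)|$. As an optional consistency check via Corollary~\ref{cor:ann-crt}, the containment $\langle\alpha_1\rangle\subseteq\langle\alpha_2\rangle$ means each CRT valuation of $\alpha_1$ is at least the corresponding valuation of $\alpha_2$. The index then equals $\prod_j p_j^{(r_j^{(1)}-r_j^{(2)})+(s_j^{(1)}-s_j^{(2)})}$, which is a positive integer.
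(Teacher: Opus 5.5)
Your proposal is correct and follows the paper's proof: both apply Proposition~\ref{prop:ideal-ann-cardinality} to $\alpha_1$ and $\alpha_2$, then cancel $q^2$ in the ratio $|C_2|/|C_1|$. Your justification of that proposition via the multiplication map $r\mapsto r\alpha$ and the first isomorphism theorem is also sound; the paper states the proposition without proof but alludes to the same kernel argument.
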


\begin{proof}
By Proposition~\ref{prop:ideal-ann-cardinality},
$
|C_i|
=
\frac{q^2}{|\Ann(\alpha_i)|},
\qquad i=1,2.
$
Therefore
\[
[C_2:C_1]
=
\frac{|C_2|}{|C_1|}
=
\frac{
q^2/|\Ann(\alpha_2)|
}{
q^2/|\Ann(\alpha_1)|
}
=
\frac{|\Ann(\alpha_1)|}
     {|\Ann(\alpha_2)|}.
\]
\end{proof}

Combining this with the CRT--valuation formula gives an explicit
index expression.

\begin{corollary}
\label{cor:index-valuations}
Let
$
C_1=\langle\alpha_1\rangle
\subseteq
C_2=\langle\alpha_2\rangle,
$
and write
$
\Phi(\alpha_\ell)
=
\bigl(
\alpha_{\ell,1,+},
\alpha_{\ell,1,-},
\ldots,
\alpha_{\ell,s,+},\\
\alpha_{\ell,s,-}
\bigr),
\ell=1,2.
$
Set
$
r_{\ell,j}
=
\nu_{p_j}(\alpha_{\ell,j,+}),
\,
s_{\ell,j}
=
\nu_{p_j}(\alpha_{\ell,j,-}).
$
Then
\[
[C_2:C_1]
=
\prod_{j=1}^{s}
p_j^{
(r_{1,j}+s_{1,j})
-
(r_{2,j}+s_{2,j})
}.
\]
\end{corollary}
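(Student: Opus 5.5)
The index formula in Corollary~\ref{cor:index-valuations} follows by substituting the explicit annihilator cardinalities into Proposition~\ref{prop:index-annihilator}. We take the inclusion $C_1\subseteq C_2$ as given and apply that proposition, which yields
\[
[C_2:C_1]=\frac{|\Ann(\alpha_1)|}{|\Ann(\alpha_2)|}.
\]
No further use of the inclusion is needed beyond this step. The inclusion does, however, guarantee that the quotient is a positive integer, so the resulting exponents will automatically be consistent.

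Next we apply Theorem~\ref{thm:crt-cardinality} to each generator $\alpha_\ell$ for $\ell=1,2$, using the valuations $r_{\ell,j}$ and $s_{\ell,j}$ defined in the statement. This gives
\[
|\Ann(\alpha_\ell)|=\prod_{j=1}^{s}p_j^{\,r_{\ell,j}+s_{\ell,j}}.
\]
Taking the quotient of the two products and combining exponents prime by prime produces the claimed formula.

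There is no real obstacle here; the only points to check are bookkeeping. First, the valuations must be computed in $\mathbb Z_{p_j^{e_j}}$ with the truncation convention $\nu_p(0)=e$, exactly as in Theorem~\ref{thm:crt-cardinality}. Second, the exponent for each $j$ should be nonnegative. This is true because $\alpha_1\in\langle\alpha_2\rangle$ forces $\alpha_{1,j,\pm}\in\langle\alpha_{2,j,\pm}\rangle$ componentwise, and by Lemma~\ref{lem:ideal-Zpe} this gives $r_{1,j}\ge r_{2,j}$ and $s_{1,j}\ge s_{2,j}$. This last observation is not needed for the equality itself, but it confirms that each factor in the product is an integer.
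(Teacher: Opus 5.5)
Your proposal is correct and is the same argument as the paper: you substitute the annihilator cardinalities from Theorem~\ref{thm:crt-cardinality} into the ratio $|\Ann(\alpha_1)|/|\Ann(\alpha_2)|$ given by Proposition~\ref{prop:index-annihilator}. Your remark that $r_{1,j}\ge r_{2,j}$ and $s_{1,j}\ge s_{2,j}$, obtained via componentwise inclusion of ideals and Lemma~\ref{lem:ideal-Zpe}, is a valid addition that the paper does not state explicitly.
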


\begin{proof}
By Theorem~\ref{thm:crt-cardinality},
$
|\Ann(\alpha_\ell)|
=
\prod_{j=1}^{s}
p_j^{r_{\ell,j}+s_{\ell,j}},
\,
\ell=1,2.
$
Substitution into Proposition~\ref{prop:index-annihilator} gives the
result.
\end{proof}
The stabilizer-theoretic interpretation of $C_2/C_1$ will be
established in Section~6.


\section{CSS-Type Stabilizer Codes and Syndrome Structure}
\label{sec:css}

Throughout this section, let
$
C_1\subseteq C_2\subseteq R_q^n
$
be $R_q$-linear codes satisfying
$
C_2^{\perp_H}\subseteq C_1\subseteq C_2.
$

Fix a generating character
\[
\chi:R_q\longrightarrow\mathbb C^\times
\]
as constructed in Section~\ref{sec:characters}.

Let $\mathcal H
=
\mathbb C^{|R_q|^n}$
with computational basis
$
\{
|x\rangle:x\in R_q^n
\}.
$
\subsection{Generalized Pauli operators}

For $u,v\in R_q^n,$
define operators $X(u),Z(v):\mathcal H\longrightarrow\mathcal H
$
by
\[
X(u)|x\rangle
=
|x+u\rangle
\qquad\text{and}\qquad
Z(v)|x\rangle
=
\chi\bigl(\langle v,x\rangle_H\bigr)|x\rangle.
\]
\begin{remark}
 For every $u,v\in R_q^n$,
$
Z(v)X(u)
=
\chi\bigl(\langle v,u\rangle_H\bigr)
X(u)Z(v).
$    
\end{remark}
\begin{remark}
    If
$
Z(v)=Z(v'),
$
then
$
v=v'.
$
Moreover,
$
X(u)=X(u')
\Longrightarrow
u=u'.
$ Hence the phase operators are faithfully
labelled by $R_q^n$.
\end{remark}




\subsection{CSS-type stabilizer structure}

Define
\[
\mathcal S
=
\left\langle
X(u),Z(v):
u\in C_1,\;
v\in C_2^{\perp_H}
\right\rangle .
\]
\begin{theorem}[CSS-type stabilizer construction]
\label{thm:CSS-construction}
Suppose
$
C_2^{\perp_H}
\subseteq
C_1
\subseteq
C_2
\subseteq
R_q^n.
$
Then:

\begin{enumerate}

\item[(i)]
$\mathcal S$ is an abelian subgroup of the generalized Pauli
group.

\item[(ii)]
Its cardinality is
$
|\mathcal S|
=
|C_1|\,
|C_2^{\perp_H}|.
$
\item[(iii)]
The simultaneous $+1$ eigenspace of $\mathcal S$ has complex
dimension
$
K
=
\frac{|C_2|}{|C_1|}.
$
\item[(iv)] The quotient of the $X$-type normalizer subgroup by the
$X$-type stabilizer subgroup is naturally isomorphic to
$
\frac{C_2}{C_1}.
$
\item[(v)]
The group of logical $Z$-operators modulo the $Z$-type
stabilizer subgroup is naturally isomorphic to
$
C_1^{\perp_H}/C_2^{\perp_H}.
$
\end{enumerate}
\end{theorem}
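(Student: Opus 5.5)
The proof follows the standard CSS template, using the commutation relation in the first Remark, the non-degeneracy of $\chi$ from Section~\ref{sec:characters}, and Proposition~\ref{prop:hermitian-biduality}.

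\textbf{Part (i).} The generators $X(u)$, $u\in C_1$, commute with each other, and so do the $Z(v)$. For mixed pairs, the Remark gives
\[
Z(v)X(u)=\chi(\langle v,u\rangle_H)\,X(u)Z(v).
\]
Since $u\in C_1\subseteq C_2$ and $v\in C_2^{\perp_H}$, we have $\langle v,u\rangle_H=\overline{\ }$-free zero: indeed $\langle v,u\rangle_H=\sigma(\langle u,v\rangle_H)=0$. So the phase is $1$ and $\mathcal S$ is abelian.

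\textbf{Part (ii).} Every element of $\mathcal S$ can be written as $X(u)Z(v)$ with $u\in C_1$, $v\in C_2^{\perp_H}$, because $u\mapsto X(u)$ and $v\mapsto Z(v)$ are homomorphisms and all generators commute. It remains to show this representation is unique. Suppose $X(u)Z(v)=X(u')Z(v')$. Applying both sides to $|0\rangle$ gives $u=u'$. Cancelling, $Z(v)=Z(v')$, so $v=v'$ by the faithfulness Remark. This also shows that no nontrivial scalar lies in $\mathcal S$. Hence $|\mathcal S|=|C_1|\,|C_2^{\perp_H}|$.

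\textbf{Part (iii).} Let $P=|\mathcal S|^{-1}\sum_{g\in\mathcal S}g$. Because $\mathcal S$ is abelian and contains no scalar other than $I$, $P$ is the projector onto the joint $+1$ eigenspace, so $K=\operatorname{tr}P$. Two trace facts are needed:
\begin{itemize}
\item $\operatorname{tr}X(u)Z(v)=0$ whenever $u\neq0$, since $X(u)$ permutes the basis without fixed points;
\item $\operatorname{tr}Z(v)=\sum_{x}\chi(\langle v,x\rangle_H)=0$ whenever $v\neq0$, by non-degeneracy of $\chi$ and orthogonality of characters.
\end{itemize}
Hence
\[
K=\frac{|R_q|^n}{|C_1|\,|C_2^{\perp_H}|}.
\]
By Proposition~\ref{prop:hermitian-biduality}, $|C_2^{\perp_H}|=|R_q|^n/|C_2|$, so $K=|C_2|/|C_1|$.

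\textbf{Parts (iv) and (v).} Here the main step is to identify the $X$-type normalizer. $X(u)$ commutes with every $Z(v)$, $v\in C_2^{\perp_H}$, exactly when $\chi(\langle v,u\rangle_H)=1$ for all such $v$. Since $C_2^{\perp_H}$ is an $R_q$-module, this is equivalent to $\chi(r\langle v,u\rangle_H)=1$ for all $r\in R_q$ and all $v$. The generating property of $\chi$ then forces $\langle v,u\rangle_H=0$, so $u\in(C_2^{\perp_H})^{\perp_H}=C_2$ by biduality. Thus the $X$-type normalizer is $X(C_2)$, and the $X$-type stabilizer is $X(C_1)$. The labelling $u\mapsto X(u)$ is an injective homomorphism, so the quotient is isomorphic to $C_2/C_1$.

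Symmetrically, $Z(v)$ commutes with all $X(u)$, $u\in C_1$, if and only if $v\in C_1^{\perp_H}$. The $Z$-type stabilizer is $Z(C_2^{\perp_H})$, so the logical $Z$-group modulo stabilizers is $C_1^{\perp_H}/C_2^{\perp_H}$.

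The main obstacle is this normalizer step, because it requires upgrading "the character value is $1$" to "the inner product is $0$". The upgrade works only because $C_2^{\perp_H}$ is closed under multiplication by $R_q$ and $\chi$ is a generating character.
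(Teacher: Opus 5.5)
Your proposal is correct and follows the paper's proof essentially step for step. The shared ingredients are: Hermitian orthogonality of $C_1\subseteq C_2$ against $C_2^{\perp_H}$ for commutativity; uniqueness of the form $X(u)Z(v)$ by acting on computational basis vectors; the trace of the averaging projector combined with $|C_2|\,|C_2^{\perp_H}|=|R_q|^n$; and the generating-character-plus-biduality argument identifying the $X$-type normalizer as $X(C_2)$.

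The only blemish is in (i). There the detour through $\sigma(\langle u,v\rangle_H)$ is unnecessary and reads as circular. Since $u\in C_2$, $\langle v,u\rangle_H=0$ holds directly from the definition of $C_2^{\perp_H}$.
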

\begin{proof}
Let
$
u\in C_1
\,\,\text{and}\,\,
v\in C_2^{\perp_H}.
$
Since
$
C_1\subseteq C_2,
$
we have
$
\langle v,u\rangle_H=0.
$
Therefore $
Z(v)X(u)=X(u)Z(v),
$
so $\mathcal S$ is abelian.

Since the generators commute,
every element of $\mathcal S$ can be written in the form
\[
X(u)Z(v),
\qquad
u\in C_1,\quad
v\in C_2^{\perp_H}.
\]
This representation is unique. Indeed, if
\[
X(u)Z(v)=X(u')Z(v')\quad\implies X(u-u')Z(v-v')=I.
\]

Acting on a computational basis vector shows that
$
u-u'=0,
$
since a nonzero translation cannot fix every basis vector.
Hence
$
v-v'=0.
$
Therefore
$
|\mathcal S|
=
|C_1|\,|C_2^{\perp_H}|.
$

Let
\[
P_{\mathcal S}
=
\frac{1}{|\mathcal S|}
\sum_{S\in\mathcal S}S.
\]
This is the projector onto the simultaneous $+1$ eigenspace of
$\mathcal S$.

Every nonidentity element $X(u)Z(v)$ of $\mathcal S$ has trace
zero. If $u\neq0$, the translation $x\mapsto x+u$ has no fixed
computational basis vector, so
\[
\operatorname{Tr}(X(u)Z(v))=0.
\]
If $u=0$ and $v\neq0$, then
\[
\operatorname{Tr}(Z(v))
=
\sum_{x\in R_q^n}
\chi\bigl(\langle v,x\rangle_H\bigr)
=
0,
\]
since $x\mapsto\chi(\langle v,x\rangle_H)$ is a nontrivial
character of the additive group $R_q^n$. Thus only the identity
has nonzero trace, with
\[
\operatorname{Tr}(I)=|R_q|^n.
\]
Therefore
\[
K
=
\operatorname{Tr}(P_{\mathcal S})
=
\frac{|R_q|^n}{|\mathcal S|}.
\]
By Proposition~\ref{prop:hermitian-biduality},
\[
|C_2|\,|C_2^{\perp_H}|
=
|R_q|^n.
\]
Thus
\[
K
=
\frac{|C_2|\,|C_2^{\perp_H}|}
{|C_1|\,|C_2^{\perp_H}|}
=
\frac{|C_2|}{|C_1|}.
\]

We next determine the logical $X$-operators. An operator $X(a)$
commutes with every $Z(v)$,
\[
v\in C_2^{\perp_H},
\]
if and only if
\[
\chi\bigl(\langle v,a\rangle_H\bigr)=1
\]
for every $v\in C_2^{\perp_H}$.

Since $C_2^{\perp_H}$ is $R_q$-linear and $\chi$ is generating,
this is equivalent to
\[
\langle v,a\rangle_H=0
,\,\,
\text{for every }v\in C_2^{\perp_H}.
\]
Hence
$
a\in
(C_2^{\perp_H})^{\perp_H}.
$
By Proposition~\ref{prop:hermitian-biduality},
$
(C_2^{\perp_H})^{\perp_H}=C_2.
$
Thus the $X$-type operators in the normalizer are precisely
$
X(C_2).
$
The $X$-type stabilizers are
$
X(C_1).
$

Therefore
\[
\frac{\{\text{$X$-type normalizer operators}\}}
{\{\text{$X$-type stabilizer operators}\}}
\cong
C_2/C_1.
\]

Similarly, $Z(b)$ commutes with every $X(u)$,
$u\in C_1$, precisely when
$
b\in C_1^{\perp_H}.
$
The $Z$-type stabilizers correspond to
$
C_2^{\perp_H}.
$
Therefore the logical $Z$-operator quotient is
$
C_1^{\perp_H}/C_2^{\perp_H}.
$
\end{proof}


\subsection{\texorpdfstring{$X$}{X}-error stabilizer syndrome}

Let $e\in R_q^n.$
For the $X$-type error $X(e)$, define its stabilizer syndrome by
its commutation phases with the $Z$-type stabilizers.

\begin{definition}
The $X$-error syndrome map is
\[
\operatorname{Syn}_X:
R_q^n
\longrightarrow
\operatorname{Hom}
\bigl(
C_2^{\perp_H},
\mathbb C^\times
\bigr),
\]
defined by
\[
\operatorname{Syn}_X(e)(v)
=
\chi\bigl(\langle v,e\rangle_H\bigr),
\qquad
v\in C_2^{\perp_H}.
\]
\end{definition}

The kernel of this map determines precisely which $X$-type errors
have the same stabilizer syndrome.
\begin{theorem}[Syndrome kernel for $X$-type errors]
\label{thm:X-syndrome-kernel}
For the CSS-type stabilizer of
Theorem~\ref{thm:CSS-construction},
\[
\ker(\operatorname{Syn}_X)
=
C_2.
\]
Consequently,
$
\operatorname{Syn}_X(e_1)
=
\operatorname{Syn}_X(e_2)
\iff
e_1-e_2\in C_2.
$
\end{theorem}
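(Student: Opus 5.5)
The plan is to show that $\operatorname{Syn}_X$ is a group homomorphism, compute its kernel by a double inclusion, and then read off the ``consequently'' clause from the homomorphism property.

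\textbf{Homomorphism property.} The Hermitian inner product is additive in its second argument, since $\sigma$ is additive. The character $\chi$ converts sums into products. Hence for fixed $v$,
\[
\operatorname{Syn}_X(e_1+e_2)(v)=\operatorname{Syn}_X(e_1)(v)\,\operatorname{Syn}_X(e_2)(v),
\]
so $\operatorname{Syn}_X$ is a homomorphism from the additive group $R_q^n$ into $\operatorname{Hom}(C_2^{\perp_H},\mathbb C^\times)$ under pointwise multiplication. Once the kernel is known, $\operatorname{Syn}_X(e_1)=\operatorname{Syn}_X(e_2)$ holds if and only if $\operatorname{Syn}_X(e_1-e_2)$ is trivial, i.e.\ if and only if $e_1-e_2\in\ker(\operatorname{Syn}_X)$. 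This gives the second assertion.

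\textbf{The inclusion $C_2\subseteq\ker(\operatorname{Syn}_X)$.} This is immediate. For $e\in C_2$ and $v\in C_2^{\perp_H}$ we have $\langle v,e\rangle_H=0$ by the definition of the dual, so $\chi(0)=1$.

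\textbf{The reverse inclusion.} This step carries the real content and mirrors the logical-$X$ computation in Theorem~\ref{thm:CSS-construction}(iv). Suppose $\chi(\langle v,e\rangle_H)=1$ for every $v\in C_2^{\perp_H}$. Since $C_2^{\perp_H}$ is an $R_q$-submodule, $rv\in C_2^{\perp_H}$ for every $r\in R_q$. The inner product is linear in its first argument, so $\langle rv,e\rangle_H=r\langle v,e\rangle_H$, and therefore
\[
\chi\bigl(r\langle v,e\rangle_H\bigr)=1\quad\text{for all } r\in R_q.
\]
By the generating property of $\chi$ from Section~\ref{sec:characters}, namely that $\chi(rx)=1$ for all $r$ forces $x=0$, we get $\langle v,e\rangle_H=0$ for every $v\in C_2^{\perp_H}$.

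\textbf{Main point to handle carefully.} The previous step gives $v$ Hermitian-orthogonal to $e$ with $v$ in the \emph{first} slot, whereas membership in $(C_2^{\perp_H})^{\perp_H}$ requires $\langle e,v\rangle_H=0$. I will use the identity $\langle e,v\rangle_H=\sigma(\langle v,e\rangle_H)$, which holds because $\sigma$ is an involutive automorphism. Thus the vanishing transfers and $e\in(C_2^{\perp_H})^{\perp_H}$.

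\textbf{Conclusion.} Proposition~\ref{prop:hermitian-biduality}, which relies on $R_q$ being Frobenius (Corollary~\ref{cor:Rq-Frobenius}), gives $(C_2^{\perp_H})^{\perp_H}=C_2$. Hence $e\in C_2$, completing the double inclusion.

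I expect no genuine obstacle. The substantive inputs are the generating-character property and biduality, both already established. The only care needed is the $R_q$-linearity bookkeeping in the reverse inclusion and the slot-symmetry of the Hermitian form under $\sigma$.
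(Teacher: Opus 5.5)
Your proposal is correct and follows essentially the same route as the paper. The paper proves $C_2\subseteq\ker(\operatorname{Syn}_X)$ directly from the definition of the dual, gets the reverse inclusion from $R_q$-linearity of $C_2^{\perp_H}$, the generating property of $\chi$, and Hermitian biduality, and reads off the second assertion from $\operatorname{Syn}_X(e_1-e_2)=1$. Your explicit use of $\langle e,v\rangle_H=\sigma(\langle v,e\rangle_H)$, to pass from orthogonality with $v$ in the first slot to membership in $(C_2^{\perp_H})^{\perp_H}$, fills in a small step that the paper leaves implicit.
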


\begin{proof}
Let $e\in\ker(\operatorname{Syn}_X).$
Then
$
\chi\bigl(\langle v,e\rangle_H\bigr)=1,
$ for every $v\in C_2^{\perp_H}.$ Since $C_2^{\perp_H}$ is $R_q$-linear, for every $r\in R_q$ and every $
v\in C_2^{\perp_H},
$ we also have $
rv\in C_2^{\perp_H}.
$
Therefore
\[
1
=
\chi\bigl(\langle rv,e\rangle_H\bigr).
\]
Using $R_q$-linearity in the first component,
\[
\langle rv,e\rangle_H
=
r\langle v,e\rangle_H.
\]
Hence
\[
\chi\bigl(r\langle v,e\rangle_H\bigr)=1,\,\text{for every } r\in R_q.\]
Since $\chi$ is generating,
\[
\chi(ra)=1,
\,\,
\text{for every }r\in R_q
\implies a=0.\]
Thus
\[
\langle v,e\rangle_H=0, \text{ for every } v\in C_2^{\perp_H}.
\]
Consequently,
\[
e\in
(C_2^{\perp_H})^{\perp_H}.
\]
By Proposition~\ref{prop:hermitian-biduality},
\[
(C_2^{\perp_H})^{\perp_H}=C_2.
\]
Hence
\[
\ker(\operatorname{Syn}_X)\subseteq C_2.
\]

Conversely, let
$
e\in C_2.
$
Then
$
\langle v,e\rangle_H=0,
$
for every
$
v\in C_2^{\perp_H}.
$
Therefore
$
\operatorname{Syn}_X(e)(v)
=
\chi(0)
=
1
$
for every $v$, and hence
$
e\in\ker(\operatorname{Syn}_X).
$
Thus
\[
C_2\subseteq\ker(\operatorname{Syn}_X).
\]

Combining the two inclusions gives
$
\ker(\operatorname{Syn}_X)=C_2.
$

Finally,
$
\operatorname{Syn}_X(e_1)
=
\operatorname{Syn}_X(e_2)
$
if and only if
$
\operatorname{Syn}_X(e_1-e_2)
=
1,
$
which is equivalent to
$
e_1-e_2\in C_2.
$
\end{proof}


\begin{corollary}
\label{cor:X-syndrome-quotient}
The image of the $X$-error syndrome map satisfies
$
\operatorname{Im}(\operatorname{Syn}_X)
\cong R_q^n/C_2.
$
Consequently,
$
|\operatorname{Im}(\operatorname{Syn}_X)|
=
\frac{|R_q|^n}{|C_2|}
=
|C_2^{\perp_H}|.
$
\end{corollary}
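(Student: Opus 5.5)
The plan is to apply the first isomorphism theorem to $\operatorname{Syn}_X$, viewed as a homomorphism of abelian groups, and then count with Hermitian duality.

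\emph{Homomorphism.} I will check that $\operatorname{Syn}_X$ is a homomorphism from the additive group $R_q^n$ into the multiplicative group $\operatorname{Hom}(C_2^{\perp_H},\mathbb C^\times)$. The inner product $\langle v,\cdot\rangle_H$ is additive in its second argument, since $\sigma$ is additive. Because $\chi$ is an additive character, this gives
\[
\operatorname{Syn}_X(e_1+e_2)(v)=\chi\bigl(\langle v,e_1\rangle_H\bigr)\chi\bigl(\langle v,e_2\rangle_H\bigr).
\]
Each $\operatorname{Syn}_X(e)$ is itself a homomorphism on $C_2^{\perp_H}$, by additivity in the first argument. Hence the map is well defined and multiplicative.

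\emph{Isomorphism.} Theorem~\ref{thm:X-syndrome-kernel} gives $\ker(\operatorname{Syn}_X)=C_2$. The first isomorphism theorem then yields $\operatorname{Im}(\operatorname{Syn}_X)\cong R_q^n/C_2$.

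\emph{Cardinality.} Taking orders, $|\operatorname{Im}(\operatorname{Syn}_X)|=|R_q^n|/|C_2|=|R_q|^n/|C_2|$. Proposition~\ref{prop:hermitian-biduality} gives $|C_2|\,|C_2^{\perp_H}|=|R_q|^n$, which identifies this quotient with $|C_2^{\perp_H}|$.

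There is no serious obstacle here. The only point needing care is the homomorphism check: the phases must multiply correctly, and the image must be treated as a subgroup of the character group of $C_2^{\perp_H}$. Everything else reduces to the previously established kernel theorem and the Frobenius duality count.

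As an optional remark, one could add that the image is actually all of $\operatorname{Hom}(C_2^{\perp_H},\mathbb C^\times)$. This follows because the character group of the finite abelian group $C_2^{\perp_H}$ has order $|C_2^{\perp_H}|$, which equals the order of the image. The statement does not require this, however.
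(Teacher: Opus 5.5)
Your proposal is correct and follows the paper's proof: Theorem~\ref{thm:X-syndrome-kernel} and the first isomorphism theorem give $\operatorname{Im}(\operatorname{Syn}_X)\cong R_q^n/C_2$, and the duality count $|C_2|\,|C_2^{\perp_H}|=|R_q|^n$ from Proposition~\ref{prop:hermitian-biduality} gives the cardinality. Your explicit check that $\operatorname{Syn}_X$ is a group homomorphism, which the paper leaves implicit, and your remark that the image is all of $\operatorname{Hom}(C_2^{\perp_H},\mathbb C^\times)$ are both valid additions.
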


\begin{proof}
By Theorem~\ref{thm:X-syndrome-kernel} and the first isomorphism
theorem,
$
\operatorname{Im}(\operatorname{Syn}_X)
\cong R_q^n/C_2.
$
Hence
$
|\operatorname{Im}(\operatorname{Syn}_X)|
=
\frac{|R_q|^n}{|C_2|}
=
|C_2^{\perp_H}|,$
where the last equality follows from Hermitian duality.
\end{proof}
\begin{corollary}[Three-level $X$-operator structure]
\label{cor:three-level-structure}
Let
$
C_2^{\perp_H}\subseteq C_1\subseteq C_2\subseteq R_q^n
$
and let $\mathcal S$ be the associated CSS-type stabilizer. Then
the chain
$
C_1\subseteq C_2\subseteq R_q^n
$
has the following stabilizer-theoretic interpretation:
\[
\begin{array}{rcl}
C_1
&\longleftrightarrow&
\text{$X$-type stabilizer labels},\\[1mm]
C_2/C_1
&\longleftrightarrow&
\text{$X$-type normalizer operators modulo $X$-type stabilizers},\\[1mm]
R_q^n/C_2
&\longleftrightarrow&
\text{physical $X$-error syndrome classes}.
\end{array}
\]
In particular, $C_2/C_1$ and $R_q^n/C_2$ describe different
equivalence relations.
\end{corollary}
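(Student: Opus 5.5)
The corollary only packages results already established, so the plan is to verify each of the three correspondences in turn by citing the appropriate earlier statement. Then I will check that the two quotient relations genuinely differ.

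\emph{First level.} By the definition of $\mathcal S$ and the uniqueness of the representation $X(u)Z(v)$ proved in Theorem~\ref{thm:CSS-construction}(ii), the $X$-type elements of $\mathcal S$ are exactly $\{X(u):u\in C_1\}$. The earlier remark shows that $u\mapsto X(u)$ is injective, and it is clearly a homomorphism since $X(u)X(u')=X(u+u')$. Hence $C_1$ is in bijection with the $X$-type stabilizer subgroup.

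\emph{Second level.} This is precisely Theorem~\ref{thm:CSS-construction}(iv). There the $X$-type normalizer operators were shown to be $X(C_2)$, using that $\chi$ is generating together with the biduality of Proposition~\ref{prop:hermitian-biduality}. The $X$-type stabilizers are $X(C_1)$, so the quotient is $C_2/C_1$. I would only restate that the map $X(a)X(C_1)\mapsto a+C_1$ is well defined and bijective, again by injectivity of $u\mapsto X(u)$.

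\emph{Third level.} By Theorem~\ref{thm:X-syndrome-kernel}, $\operatorname{Syn}_X$ is a group homomorphism $R_q^n\to\operatorname{Hom}(C_2^{\perp_H},\mathbb C^\times)$ with kernel $C_2$. Its fibres, which are the physical syndrome classes, are therefore exactly the cosets $e+C_2$. Corollary~\ref{cor:X-syndrome-quotient} then identifies the set of syndrome classes with $R_q^n/C_2$ through the first isomorphism theorem.

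\emph{Distinctness of the two equivalence relations.} Congruence modulo $C_1$ is an equivalence relation on $C_2$, and its classes are logical-operator classes. Congruence modulo $C_2$ is an equivalence relation on $R_q^n$, and its classes are syndrome classes. Whenever $C_1\subsetneq C_2$, two elements of $C_2$ lying in distinct $C_1$-cosets have the same (trivial) syndrome, so the two relations differ. This is the only point that requires an argument beyond citation, and it is immediate from Theorem~\ref{thm:X-syndrome-kernel} applied to elements of $C_2$.

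I do not expect a genuine obstacle. The only care needed is to make the word ``natural'' precise by naming the explicit maps, namely $u\mapsto X(u)$, $a+C_1\mapsto X(a)X(C_1)$, and $e+C_2\mapsto\operatorname{Syn}_X(e)$, and to note that each is well defined and bijective by the injectivity remark and the kernel computation.
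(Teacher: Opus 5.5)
Your proposal is correct and follows the same route as the paper. The paper simply derives the first two levels from the $X$-type stabilizer and normalizer description in Theorem~\ref{thm:CSS-construction}, and the third from $\ker(\operatorname{Syn}_X)=C_2$ in Theorem~\ref{thm:X-syndrome-kernel}. Your version is more explicit: it names the bijections and gives a short argument, for $C_1\subsetneq C_2$, that the two equivalence relations differ, a point the paper leaves implicit and essentially records afterwards in Proposition~\ref{prop:corrected-coset-syndrome}.
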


\begin{proof}
The first two assertions follow from the $X$-type stabilizer and
normalizer structure, while the third follows from
Theorem~\ref{thm:X-syndrome-kernel}, since
$
\ker(\operatorname{Syn}_X)=C_2.
$
\end{proof}
\subsection{Correction of coset--syndrome distinguishability}
\label{subsec:coset-syndrome-correction}

We now apply the distinction established in
Corollary~\ref{cor:three-level-structure} to a transversal
\[
C_2=\bigsqcup_{e\in\varepsilon}(C_1+e),
\qquad \varepsilon\subseteq C_2,
\]
of $C_1$ in $C_2$.

\begin{proposition}
\label{prop:corrected-coset-syndrome}
Let
\[
C_2^{\perp_H}\subseteq C_1\subseteq C_2\subseteq R_q^n,
\]
and let $\varepsilon$ be a transversal of $C_1$ in $C_2$. Then:

\begin{enumerate}
\item distinct elements of $\varepsilon$ represent distinct
classes in $C_2/C_1$;

\item every element of $\varepsilon$ has the same physical
$X$-error stabilizer syndrome.
\end{enumerate}

Consequently, $\varepsilon$ parametrizes distinct logical
$X$-operator classes, but not distinct physical $X$-error syndrome
classes.
\end{proposition}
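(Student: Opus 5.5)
The proposition follows by combining the definition of a transversal (Section~\ref{sec:nested-quotient}) with Theorem~\ref{thm:X-syndrome-kernel}. I will prove the two items separately and then read off the consequence from Corollary~\ref{cor:three-level-structure} and Theorem~\ref{thm:CSS-construction}(iv).

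For item 1, I will use the canonical quotient map $\pi_{C_1}:C_2\to C_2/C_1$. Take distinct $e,e'\in\varepsilon$. By the definition of a transversal, $\varepsilon$ contains exactly one representative of each coset of $C_1$ in $C_2$, so $e$ and $e'$ lie in different cosets. Hence $e-e'\notin C_1$, and therefore $\pi_{C_1}(e)\neq\pi_{C_1}(e')$. In other words, the restriction $\pi_{C_1}|_{\varepsilon}$ is the bijection $\varepsilon\to C_2/C_1$ already noted in Section~\ref{sec:nested-quotient}. Theorem~\ref{thm:CSS-construction}(iv) identifies $C_2/C_1$ with $X$-type normalizer operators modulo $X$-type stabilizers. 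Via this identification, the elements of $\varepsilon$ label pairwise distinct logical $X$-operator classes.

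For item 2, every $e\in\varepsilon$ lies in $C_2$ because $\varepsilon\subseteq C_2$. Theorem~\ref{thm:X-syndrome-kernel} gives $\ker(\operatorname{Syn}_X)=C_2$, so $\operatorname{Syn}_X(e)$ is the trivial homomorphism on $C_2^{\perp_H}$. One can also see this directly: $\operatorname{Syn}_X(e)(v)=\chi(\langle v,e\rangle_H)=\chi(0)=1$ for all $v\in C_2^{\perp_H}$. Thus all elements of $\varepsilon$ share one syndrome, and it is the trivial syndrome of $0$. Equivalently, for $e,e'\in\varepsilon$ we have $e-e'\in C_2$, so $\operatorname{Syn}_X(e)=\operatorname{Syn}_X(e')$ by the second statement of Theorem~\ref{thm:X-syndrome-kernel}.

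For the consequence, I combine the two items. The set $\varepsilon$ maps injectively into $C_2/C_1$ but maps to a single point of $\operatorname{Im}(\operatorname{Syn}_X)\cong R_q^n/C_2$ (Corollary~\ref{cor:X-syndrome-quotient}). Whenever $|\varepsilon|=[C_2:C_1]>1$, it therefore fails to separate physical syndrome classes. I expect no step to be a real obstacle; each is an immediate application of earlier results. The only point needing care is keeping the two equivalence relations apart: congruence modulo $C_1$ in item 1 versus modulo $C_2$ in item 2. I will state explicitly that the transversal condition concerns the former, while syndrome equality is governed by the latter.
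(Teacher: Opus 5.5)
Your proposal is correct and follows essentially the same argument as the paper: item~1 comes from the transversal property $e-e'\notin C_1$, and item~2 from $e-e'\in C_2=\ker(\operatorname{Syn}_X)$ via Theorem~\ref{thm:X-syndrome-kernel}. Your added remark that the failure to separate syndrome classes only has content when $[C_2:C_1]>1$ is an accurate refinement that the paper leaves implicit.
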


\begin{proof}
Since $\varepsilon$ is a transversal, if
$
e_i,e_j\in\varepsilon,
\,\,
e_i\neq e_j,
$
then
$
e_i-e_j\notin C_1.
$
Hence $e_i$ and $e_j$ determine distinct elements of $C_2/C_1$.

On the other hand,
$
e_i,e_j\in C_2,
$
 so
$
e_i-e_j\in C_2.
$
By Theorem~\ref{thm:X-syndrome-kernel},
$
\operatorname{Syn}_X(e_i)
=
\operatorname{Syn}_X(e_j).
$
Thus all elements of $\varepsilon$ have the same physical
$X$-error syndrome.
\end{proof}
\subsection{General Pauli errors and the stabilizer correctability criterion}
A generalized Pauli error has the form
$
E(a,b)=X(a)Z(b),
\text{ for }a,b\in R_q^n.
$
Let $\mathcal P$ denote the generalized Pauli group and let
$N(\mathcal S)$ denote the normalizer of $\mathcal S$ in
$\mathcal P$.

Since global scalar phases have no physical effect, generalized
Pauli operators will henceforth be identified modulo global phases.

A generalized Pauli error has trivial stabilizer syndrome precisely
when it commutes with every element of $\mathcal S$. Thus the kernel
of the full stabilizer syndrome map is the Pauli centralizer of
$\mathcal S$, which, under the above convention, coincides with its
Pauli normalizer $N(\mathcal S)$.

We now recall the standard stabilizer correctability criterion
\cite{gottesman1997stabilizer}.
\begin{theorem}[Correctability criterion \cite{gottesman1997stabilizer}]
\label{thm:Pauli-correctability}
Let
$
\mathcal E=\{E_a\}
$
be a set of generalized Pauli errors. Then $\mathcal E$ is
correctable by the stabilizer code determined by $\mathcal S$ if,
for every pair
$
E_a,E_b\in\mathcal E,
$
one has
\[
E_a^\dagger E_b
\notin
N(\mathcal S)\setminus\mathcal S.
\]
Equivalently, for every pair of errors, either
\[
E_a^\dagger E_b\in\mathcal S,
\]
in which case the two errors act identically on the code space, or
\[
E_a^\dagger E_b\notin N(\mathcal S),
\]
in which case they produce distinct stabilizer syndromes.
\end{theorem}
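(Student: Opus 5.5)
The plan is to prove the criterion through the Knill--Laflamme conditions. Let $P=P_{\mathcal S}$ be the projector onto the code space from the proof of Theorem~\ref{thm:CSS-construction}. It suffices to show that for every pair $E_a,E_b\in\mathcal E$,
\[
P\,E_a^\dagger E_b\,P=c_{ab}P
\]
for some scalar $c_{ab}$. Since the generalized Pauli group is closed under products and inverses modulo phases, $F=E_a^\dagger E_b$ is again (up to a phase) a generalized Pauli operator $X(u)Z(v)$. The hypothesis leaves two cases, $F\in\mathcal S$ or $F\notin N(\mathcal S)$, and I handle them separately.

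In the case $F\in\mathcal S$ (up to phase), $F$ acts as a scalar on the $+1$ eigenspace, so $PFP=c\,P$ with $|c|=1$. This also gives the "acts identically" clause: $E_a$ and $E_b$ differ on the code space only by a global phase.

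In the case $F\notin N(\mathcal S)$, I use the identification made just before the theorem, that the Pauli normalizer coincides with the centralizer modulo phases. So some $S\in\mathcal S$ satisfies $SF=\zeta FS$ with $\zeta\neq1$; by the commutation relation in the Remark, $\zeta$ is a character value. Using $SP=PS=P$ we get
\[
PFP=PFSP=\zeta^{-1}PSFP=\zeta^{-1}PFP,
\]
and therefore $PFP=0$. For the syndrome statement, the syndrome of a Pauli error $E$ is the character $S\mapsto\phi_E(S)$ defined by $SE=\phi_E(S)ES$. This is multiplicative, with $\phi_{E_a^\dagger E_b}=\overline{\phi_{E_a}}\phi_{E_b}$. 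Hence a nontrivial syndrome for $F$ is exactly the statement that $E_a$ and $E_b$ have different syndromes, so $E_aP\mathcal H$ and $E_bP\mathcal H$ are orthogonal eigenspaces.

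With the Knill--Laflamme condition verified in both cases, correctability follows from the standard recovery argument: measure the syndrome, then apply $E^\dagger$ for a representative error $E$ of that syndrome. Errors sharing a syndrome differ by an element of $\mathcal S$, so the recovery is the same for all of them.

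The main obstacle I expect is the bookkeeping of phases. I need to check that $E_a^\dagger E_b$ is a Pauli operator, and that "$\notin N(\mathcal S)$" really yields a noncommuting $S$ with $\zeta\ne1$ rather than just a phase conjugation. To do this I will compute $X(u)Z(v)\,X(u')Z(v')\,(X(u)Z(v))^{-1}$ explicitly from the Remark. This shows that conjugation only multiplies by a character value $\chi(\cdot)$, so normalizing $\mathcal S$ modulo phases is the same as commuting with it.
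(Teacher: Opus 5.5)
Your proof is correct. The paper gives no argument for this statement; it quotes it from Gottesman. What you propose is the standard proof from that source: verify the Knill--Laflamme condition $PE_a^\dagger E_bP=c_{ab}P$, replacing the qubit sign $-1$ by a nontrivial character value $\zeta$. When $E_a^\dagger E_b\in\mathcal S$ up to phase, $c_{ab}$ is a phase. When some $S\in\mathcal S$ fails to commute with $F$, the computation $PFP=PFSP=\zeta^{-1}PSFP=\zeta^{-1}PFP$ gives $c_{ab}=0$. Your multiplicativity relation $\phi_{E_a^\dagger E_b}=\overline{\phi_{E_a}}\,\phi_{E_b}$ correctly yields the syndrome clause.

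The phase bookkeeping you flag as the main obstacle is lighter than you expect. In the case $F\notin N(\mathcal S)$, and again in the recovery step, you only use the trivial inclusion of the centralizer of $\mathcal S$ in $N(\mathcal S)$. In the recovery step, if $E_a,E_b\in\mathcal E$ have equal syndromes, then $E_a^\dagger E_b$ commutes with $\mathcal S$, so it lies in $N(\mathcal S)$ and hence, by hypothesis, in $\mathcal S$. So no explicit conjugation computation is needed.

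Do rephrase the sentence claiming that normalizing $\mathcal S$ modulo phases is the same as commuting with it. Taken literally it is false: conjugation by any generalized Pauli operator only multiplies elements of $\mathcal S$ by $\chi$-values, so every such operator normalizes $\mathcal S$ modulo phases. The correct statement is that the genuine normalizer equals the centralizer because $\mathcal S$ contains no scalar $\zeta I$ with $\zeta\neq 1$. This follows from the unique representation $X(u)Z(v)$ in the proof of Theorem~\ref{thm:CSS-construction}. Even this converse inclusion is not needed for the sufficiency direction you prove.
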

\begin{proposition}[Correctability criterion for $X$-type errors]
\label{cor:X-error-correctability}
Let
$
E\subseteq R_q^n
$
and consider the family of $X$-type errors
$
\mathcal{E}_X=\{X(e):e\in E\}.
$ Then $\mathcal{E}_X$ is correctable by the stabilizer code determined by
$\mathcal{S}$ if and only if, for every $e_i,e_j\in E$,
\[
e_j-e_i\in C_1
\qquad\text{or}\qquad
e_j-e_i\notin C_2.
\]
Equivalently,
$
e_j-e_i\notin C_2\setminus C_1
\,
\text{ for all }e_i,e_j\in E.
$
\end{proposition}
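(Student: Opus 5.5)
The plan is to reduce the statement to Theorem~\ref{thm:Pauli-correctability} by computing, for $X$-type operators, exactly when $X(e_i)^\dagger X(e_j)$ lies in $\mathcal S$ and when it lies in $N(\mathcal S)$. First, from $X(u)|x\rangle=|x+u\rangle$ we get $X(e_i)^\dagger=X(-e_i)$. Hence
\[
X(e_i)^\dagger X(e_j)=X(e_j-e_i),
\]
and the criterion becomes a condition on the single vector $d=e_j-e_i\in R_q^n$. We then need two membership characterizations for a pure $X$-type operator $X(d)$, with phases ignored as the paper's convention allows.

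\textbf{Normalizer.} We claim $X(d)\in N(\mathcal S)$ iff $d\in C_2$. Operators $X(d)$ and $X(u)$ always commute. By the commutation relation $Z(v)X(d)=\chi(\langle v,d\rangle_H)X(d)Z(v)$, the operator $X(d)$ commutes (up to phase, hence normalizes) with all $Z(v)$, $v\in C_2^{\perp_H}$, iff $\chi(\langle v,d\rangle_H)=1$ for all such $v$. This says $d\in\ker(\operatorname{Syn}_X)$, which equals $C_2$ by Theorem~\ref{thm:X-syndrome-kernel}.

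\textbf{Stabilizer.} We claim $X(d)\in\mathcal S$ iff $d\in C_1$. One direction is immediate from the definition of $\mathcal S$. For the converse, every element of $\mathcal S$ has the form $X(u)Z(v)$ with $u\in C_1$ and $v\in C_2^{\perp_H}$, as shown in the proof of Theorem~\ref{thm:CSS-construction}. Suppose $X(d)=cX(u)Z(v)$ for some scalar $c$. Acting on basis vectors forces $d=u$ (translations are faithful) and then $Z(v)$ to be scalar. Evaluating at $|0\rangle$ shows that scalar is $1$, so $\chi(\langle v,x\rangle_H)=1$ for all $x$, and non-degeneracy gives $v=0$.

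Combining the two claims, $X(e_i)^\dagger X(e_j)\in N(\mathcal S)\setminus\mathcal S$ iff $e_j-e_i\in C_2\setminus C_1$. Theorem~\ref{thm:Pauli-correctability} then yields the stated condition, and the ``equivalently'' clause is just a restatement.

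\textbf{Main obstacle.} The difficulty is the ``only if'' direction. Theorem~\ref{thm:Pauli-correctability} is quoted only as a sufficient condition, so necessity must be argued directly. Suppose some $d=e_j-e_i\in C_2\setminus C_1$. Then $X(d)$ normalizes $\mathcal S$ but is not in it. By Theorem~\ref{thm:CSS-construction}(iv), it represents a nontrivial class of $C_2/C_1$, and so it acts as a nontrivial logical operator on the code space.

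To make this concrete, I would use the explicit code space spanned by coset states
\[
|c+C_1\rangle=|C_1|^{-1/2}\sum_{u\in C_1}|c+u\rangle,\qquad c\in C_2 .
\]
On these states, $X(d)$ maps $|c+C_1\rangle$ to $|c+d+C_1\rangle$, an orthogonal state because $d\notin C_1$. Hence $X(e_i)$ and $X(e_j)$ carry $|c+C_1\rangle$ to states with identical syndrome whose inner products differ on the code space. This violates the Knill--Laflamme condition $P X(e_i)^\dagger X(e_j) P=\lambda P$, so no recovery can correct both errors.
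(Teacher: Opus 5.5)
\paragraph{Comparison with the paper's proof.}
Your proposal is correct. Its reduction is the paper's:
\begin{itemize}
\item compute $X(e_i)^\dagger X(e_j)=X(e_j-e_i)$;
\item identify the $X$-type stabilizers with $X(C_1)$ and the $X$-type normalizer elements with $X(C_2)$;
\item apply Theorem~\ref{thm:Pauli-correctability}.
\end{itemize}
The paper simply cites Theorem~\ref{thm:CSS-construction} for both memberships. You re-derive them: the normalizer via Theorem~\ref{thm:X-syndrome-kernel}, which is the same biduality argument, and stabilizer membership via the phase computation on basis vectors. Your derivations are fine, though $v=0$ is not needed once $d=u\in C_1$.

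\paragraph{Necessity.}
The genuine difference is the ``only if'' direction. The paper's proof reads Theorem~\ref{thm:Pauli-correctability} as ``correctable precisely when'', although it is stated there only as a sufficient condition. So the paper never actually argues necessity.

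Your Knill--Laflamme argument supplies it. The states $|c+C_1\rangle$ with $c\in C_2$ lie in the code space; in fact they span it. For $d\in C_2$, $X(d)$ preserves the code space, so $PX(d)P=X(d)P\neq 0$. On the other hand, for $d\notin C_1$ the cosets $c+C_1$ and $c+d+C_1$ are disjoint, so $\langle c+C_1|X(d)|c+C_1\rangle=0$. This forces $\lambda=0$ in $PX(d)P=\lambda P$. That contradicts $PX(d)P\,|c+C_1\rangle=|c+d+C_1\rangle\neq 0$.

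Your route therefore gives a complete proof of the equivalence under the standard Knill--Laflamme notion of correctability. The cost is working with the explicit code space rather than only the group-theoretic criterion.

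\paragraph{Two wording fixes.}
First, in the normalizer step, ``commutes up to phase'' is automatic for generalized Pauli operators. What you need is exact commutation, $\chi(\langle v,d\rangle_H)=1$ for all $v\in C_2^{\perp_H}$. This is indeed equivalent to $X(d)$ normalizing $\mathcal S$, since $\omega Z(v)\in\mathcal S$ forces $\omega=1$.

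Second, replace the sentence about states ``whose inner products differ on the code space'' with the explicit $\lambda$-contradiction above.
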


\begin{proof}
For two $X$-type errors $X(e_i)$ and $X(e_j)$, we have
\[
X(e_i)^\dagger X(e_j)
=
X(-e_i)X(e_j)
=
X(e_j-e_i).
\]
By Theorem~\ref{thm:CSS-construction}, the $X$-type stabilizers are precisely
the operators $X(c)$ with $c\in C_1$, whereas the $X$-type operators in
the normalizer are precisely the operators $X(c)$ with $c\in C_2$.

Therefore,
\[
X(e_j-e_i)\in\mathcal{S}
\quad\Longleftrightarrow\quad
e_j-e_i\in C_1,
\]
while
\[
X(e_j-e_i)\in N(\mathcal{S})\setminus\mathcal{S}
\quad\Longleftrightarrow\quad
e_j-e_i\in C_2\setminus C_1.
\]
By Theorem~\ref{thm:Pauli-correctability}, the error family
$\mathcal{E}_X$ is correctable precisely when
\[
X(e_i)^\dagger X(e_j)
\notin
N(\mathcal{S})\setminus\mathcal{S}
\]
for every pair $e_i,e_j\in E$. Hence the forbidden case is exactly
\[
e_j-e_i\in C_2\setminus C_1.
\]
Equivalently, for every pair $e_i,e_j\in E$, either
\[
e_j-e_i\in C_1\qquad
\text{or}\qquad
e_j-e_i\notin C_2.
\]
\end{proof}
\begin{remark}
Two errors having the same stabilizer syndrome need not differ by
a stabilizer element.  They may instead differ by a nontrivial
logical operator in $
N(\mathcal S)\setminus\mathcal S.$
Therefore equality of syndrome alone is not sufficient to conclude
that two physical errors are equivalent on the code space.
\subsection{Syndrome representatives and an \texorpdfstring{$X$}{X}-error recovery construction}
\label{subsec:syndrome-recovery}

The preceding results identify $R_q^n/C_2$, rather than $C_2/C_1$,
as the quotient governing physical $X$-error syndromes. This suggests
a natural choice of error representatives based directly on the syndrome
equivalence relation.
\end{remark}
\begin{definition}[Syndrome transversal]
\label{def:syndrome-transversal}
A subset $\Gamma\subseteq R_q^n$ is called an $X$-syndrome transversal
if it contains exactly one representative from each coset of $C_2$ in
$R_q^n$. Thus
\[
R_q^n
=
\bigsqcup_{\gamma\in\Gamma}
(\gamma+C_2).
\]
Equivalently,
\[
|\Gamma|
=
[R_q^n:C_2]
=
\frac{|R_q|^n}{|C_2|}
=
|C_2^{\perp_H}|.
\]
\end{definition}

The following proposition shows that, unlike a transversal of $C_1$
inside $C_2$, a syndrome transversal genuinely consists of
syndrome-distinguishable $X$-errors.

\begin{proposition}
\label{prop:syndrome-transversal}
Let $\Gamma$ be an $X$-syndrome transversal. Then for any distinct
$\gamma_1,\gamma_2\in\Gamma$,
\[
\operatorname{Syn}_X(\gamma_1)
\neq
\operatorname{Syn}_X(\gamma_2).
\]
Consequently, the restriction
\[
\operatorname{Syn}_X|_{\Gamma}:
\Gamma\longrightarrow
\operatorname{Im}(\operatorname{Syn}_X)
\]
is a bijection.
\end{proposition}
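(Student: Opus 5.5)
The plan is to reduce everything to Theorem~\ref{thm:X-syndrome-kernel}, which identifies the fibres of $\operatorname{Syn}_X$ with the cosets of $C_2$, together with the defining property of an $X$-syndrome transversal from Definition~\ref{def:syndrome-transversal}. The statement has two parts: injectivity on $\Gamma$, and surjectivity onto $\operatorname{Im}(\operatorname{Syn}_X)$. Each will follow from one direction of the coset decomposition $R_q^n=\bigsqcup_{\gamma\in\Gamma}(\gamma+C_2)$.

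For injectivity, I would argue by contradiction. Take distinct $\gamma_1,\gamma_2\in\Gamma$ and suppose $\operatorname{Syn}_X(\gamma_1)=\operatorname{Syn}_X(\gamma_2)$. The last assertion of Theorem~\ref{thm:X-syndrome-kernel} then gives $\gamma_1-\gamma_2\in C_2$, so $\gamma_1+C_2=\gamma_2+C_2$. But $\Gamma$ contains exactly one representative of each coset of $C_2$, which forces $\gamma_1=\gamma_2$, a contradiction.

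For surjectivity, let $\phi\in\operatorname{Im}(\operatorname{Syn}_X)$, say $\phi=\operatorname{Syn}_X(e)$ with $e\in R_q^n$. By the disjoint union decomposition there is $\gamma\in\Gamma$ with $e\in\gamma+C_2$, that is, $e-\gamma\in C_2$. Theorem~\ref{thm:X-syndrome-kernel} then yields $\operatorname{Syn}_X(\gamma)=\operatorname{Syn}_X(e)=\phi$. Injectivity and surjectivity together show that $\operatorname{Syn}_X|_\Gamma$ is a bijection onto the image.

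As a consistency check, $|\Gamma|=|R_q|^n/|C_2|=|\operatorname{Im}(\operatorname{Syn}_X)|$ by Definition~\ref{def:syndrome-transversal} and Corollary~\ref{cor:X-syndrome-quotient}. Since both sets are finite, this equality of cardinalities would also let injectivity alone imply bijectivity.

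There is no real obstacle here. The only substantive input is the kernel computation $\ker(\operatorname{Syn}_X)=C_2$, which rests on the generating character and Hermitian biduality (Proposition~\ref{prop:hermitian-biduality}) and is already established. The one point of care is to note that $\operatorname{Syn}_X$ is a group homomorphism into $\operatorname{Hom}(C_2^{\perp_H},\mathbb C^\times)$, so equality of syndromes is the same as equality of the images of the cosets.
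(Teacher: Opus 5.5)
Your proof is correct and follows the paper's argument: the paper proves injectivity from Theorem~\ref{thm:X-syndrome-kernel} exactly as you do, then gets bijectivity from the count $|\Gamma|=[R_q^n:C_2]=|\operatorname{Im}(\operatorname{Syn}_X)|$, which you also mention as an alternative. Your direct surjectivity argument via the coset decomposition is equally valid, and it has the small advantage of not needing Corollary~\ref{cor:X-syndrome-quotient} or finiteness.
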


\begin{proof}
Let $\gamma_1,\gamma_2\in\Gamma$ be distinct. Since $\Gamma$ contains
exactly one representative from each coset of $C_2$ in $R_q^n$, we have
\[
\gamma_1+C_2\neq \gamma_2+C_2.
\]
Hence
\[
\gamma_1-\gamma_2\notin C_2.
\]
By Theorem~\ref{thm:X-syndrome-kernel},
\[
\operatorname{Syn}_X(\gamma_1)
=
\operatorname{Syn}_X(\gamma_2)
\quad\Longleftrightarrow\quad
\gamma_1-\gamma_2\in C_2.
\]
Therefore
\[
\operatorname{Syn}_X(\gamma_1)
\neq
\operatorname{Syn}_X(\gamma_2).
\]
Thus $\operatorname{Syn}_X|_{\Gamma}$ is injective. Since
\[
|\Gamma|
=
[R_q^n:C_2]
=
|\operatorname{Im}(\operatorname{Syn}_X)|,
\]
it is bijective.
\end{proof}

We now obtain a correctable error family directly from the physical
syndrome quotient.

\begin{theorem}[Syndrome-transversal \texorpdfstring{$X$}{X}-error construction]
\label{thm:syndrome-transversal-correctable}
Let $\Gamma$ be an $X$-syndrome \\transversal of $C_2$ in $R_q^n$, and
define
\[
\mathcal{E}_{\Gamma}
=
\{X(\gamma):\gamma\in\Gamma\}.
\]
Then $\mathcal{E}_{\Gamma}$ is a correctable family of $X$-type errors
for the stabilizer code determined by $\mathcal{S}$.
\end{theorem}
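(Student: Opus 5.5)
The plan is to verify the pairwise condition of Proposition~\ref{cor:X-error-correctability} directly for the index set $E=\Gamma$. By that proposition, $\mathcal{E}_\Gamma$ is correctable if and only if, for every pair $\gamma_i,\gamma_j\in\Gamma$,
\[
\gamma_j-\gamma_i\in C_1
\qquad\text{or}\qquad
\gamma_j-\gamma_i\notin C_2 .
\]
So the whole argument reduces to a case split on whether the two chosen representatives coincide.

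\textbf{Case $\gamma_i=\gamma_j$.} Here $\gamma_j-\gamma_i=0$. Since $C_1$ is an $R_q$-linear code, $0\in C_1$, so the first alternative holds. Equivalently, $X(\gamma_i)^\dagger X(\gamma_j)=X(0)=I\in\mathcal S$.

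\textbf{Case $\gamma_i\neq\gamma_j$.} By Definition~\ref{def:syndrome-transversal}, $\Gamma$ contains exactly one element from each coset of $C_2$, so $\gamma_i+C_2\neq\gamma_j+C_2$. Hence $\gamma_j-\gamma_i\notin C_2$, which is the second alternative. This is exactly the injectivity argument of Proposition~\ref{prop:syndrome-transversal}: by Theorem~\ref{thm:X-syndrome-kernel}, $X(\gamma_i)$ and $X(\gamma_j)$ have distinct syndromes. In the language of Theorem~\ref{thm:Pauli-correctability}, $X(\gamma_j-\gamma_i)\notin N(\mathcal S)$, because the $X$-type elements of the normalizer are precisely $X(C_2)$ by Theorem~\ref{thm:CSS-construction}(iv).

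Combining the two cases, no pair satisfies $\gamma_j-\gamma_i\in C_2\setminus C_1$, and correctability follows from Proposition~\ref{cor:X-error-correctability}.

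I do not expect a genuine obstacle. The only step needing care is the identification of the $X$-type part of $N(\mathcal S)$ with $X(C_2)$, working modulo global phases. An operator $X(a)$ could fail to commute with some $Z(v)$ yet commute with it up to a phase; since phases are ignored, it might seem such an $X(a)$ should still count as lying in the normalizer.

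This concern is resolved by the convention the paper adopted just before Theorem~\ref{thm:Pauli-correctability}: the normalizer is identified with the Pauli centralizer, so the trivial-syndrome errors are exactly those with $X(a)$ commuting with every $Z(v)$. The proof of Theorem~\ref{thm:CSS-construction}, together with the generating property of $\chi$, shows these are exactly $a\in C_2$. I would cite that identification explicitly, as already used in Proposition~\ref{cor:X-error-correctability}, rather than reprove it.

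As a closing observation, the recovery procedure is then immediate. Given a measured syndrome, apply $X(-\gamma)$ for the unique $\gamma\in\Gamma$ with that syndrome; this $\gamma$ exists and is unique by the bijection in Proposition~\ref{prop:syndrome-transversal}.
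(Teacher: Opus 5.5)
Your proof is correct and is essentially the paper's own argument: you split into $\gamma_1=\gamma_2$ (difference $0\in C_1$) and $\gamma_1\neq\gamma_2$ (difference $\notin C_2$, since the two representatives lie in distinct cosets of $C_2$), then invoke Proposition~\ref{cor:X-error-correctability}. Your extra worry about the normalizer is harmless but needs no convention to resolve: $\mathcal S$ contains no nontrivial scalar multiple of $I$, so any Pauli operator normalizing $\mathcal S$ must actually commute with every element of it.
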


\begin{proof}
Let $\gamma_1,\gamma_2\in\Gamma$.

If $\gamma_1=\gamma_2$, then
\[
\gamma_2-\gamma_1=0\in C_1.
\]

Suppose now that $\gamma_1\neq\gamma_2$. Since they represent distinct
cosets of $C_2$,
\[
\gamma_2-\gamma_1\notin C_2.
\]
Thus, for every pair $\gamma_1,\gamma_2\in\Gamma$, either
\[
\gamma_2-\gamma_1\in C_1
\qquad\text{or}\qquad
\gamma_2-\gamma_1\notin C_2.
\]
By Proposition~\ref{cor:X-error-correctability},
the error family $\mathcal{E}_{\Gamma}$ is correctable.
\end{proof}

\begin{remark}
\label{rem:scope-syndrome-recovery}
Theorem~\ref{thm:syndrome-transversal-correctable} does not assert
that all $X$-type errors in $R_q^n$ are correctable. Rather, it
identifies the prescribed family
$
\mathcal{E}_{\Gamma}=\{X(\gamma):\gamma\in\Gamma\}
$
as a correctable family. Indeed, if an arbitrary error has the form
$
e=\gamma+c,\text{ where }\gamma\in\Gamma\text{ and } c\in C_2,
$
then $e$ and $\gamma$ have the same stabilizer syndrome. Applying
$X(-\gamma)$ leaves the residual operator
$
X(-\gamma)X(e)=X(c).
$
This residual operator acts trivially on the code space if and only if
$c\in C_1$, whereas $c\in C_2\setminus C_1$ gives a nontrivial
logical $X$-operator. Thus syndrome measurement alone cannot, in
general, correct all errors in a syndrome coset $\gamma+C_2$.
\end{remark}
For the prescribed error family $\mathcal{E}_{\Gamma}$, the
syndrome-based recovery is well defined. Indeed, by
Proposition~\ref{prop:syndrome-transversal}, every measured syndrome
corresponds to a unique $\gamma\in\Gamma$. The recovery operator
associated with the syndrome $s=\operatorname{Syn}_X(\gamma)$ is
\[
\mathcal{R}_s=X(-\gamma),
\]
and hence
\[
\mathcal{R}_sX(\gamma)=I.
\]
Thus every error in $\mathcal{E}_{\Gamma}$ is recovered exactly.

The recovery procedure may be summarized as follows:
\begin{enumerate}
    \item Measure the stabilizer syndrome
    \[
    s=\operatorname{Syn}_X(e).
    \]

    \item Determine the unique $\gamma\in\Gamma$ satisfying
    \[
    \operatorname{Syn}_X(\gamma)=s.
    \]

    \item Apply the recovery operator
    \[
    X(-\gamma).
    \]
\end{enumerate}
The syndrome transversal $\Gamma$ is not unique; in applications,
its representatives may be chosen according to an additional
criterion, such as minimum weight or maximum likelihood for a
specified noise model.
\section{Examples}
\label{sec:examples}

We illustrate the preceding results through explicit examples over rings $\mathbb{Z}_q[i]$. We present a new example, namely, over $\mathbb{Z}_{125}$ and revisit an example in \cite{GUZELTEPE2027102881}, which in particular demonstrates the failure of the formula they are using to compute cardinality of a principal ideal of the ring $R_q.$ Further, these examples also demonstrate the distinction
between logical and physical syndrome quotients.

\subsection{Revisiting the example of \cite{GUZELTEPE2027102881} over \texorpdfstring{$\mathbb Z_{85}[i]$}{Z_{85}[i]}}
\label{subsec:example-Z85}

Consider
$
R_{85}=\mathbb Z_{85}[i],
\,
C_2=\langle2+i\rangle,
\,
C_1=\langle7+6i\rangle.
$
Since
$
(2+i)(4+i)=7+6i,
$
we have
$
C_1\subseteq C_2.
$

Using
$
N(2+i)=5,
\,
N(7+6i)=85,
$
we obtain
$
|C_2|=\frac{85^2}{5}=1445,
\,
|C_1|=\frac{85^2}{85}=85.
$
Consequently,
\[
|C_2/C_1|=17.
\]

Furthermore,
$
C_2^{\perp_H}
=
\Ann(2-i).
$
Since
$
(34+17i)(2-i)=85\equiv0\pmod{85}
$
and
$
|C_2^{\perp_H}|
=
\frac{7225}{1445}
=
5,
$
we obtain
\[
C_2^{\perp_H}
=
\langle34+17i\rangle.
\]
Also,
\[
17(7+6i)
\equiv
34+17i
\pmod{85},
\]
so
\[
C_2^{\perp_H}
\subseteq
C_1
\subseteq
C_2.
\]

The number of logical $X$-operator classes is therefore
\[
|C_2/C_1|=17,
\]
whereas the total number of physical $X$-error syndrome classes is
\[
|R_{85}/C_2|
=
\frac{7225}{1445}
=
5.
\]
Equivalently,
\[
|\operatorname{Im}(\operatorname{Syn}_X)|
=
|C_2^{\perp_H}|
=
5.
\]

Thus
\[
|C_2/C_1|=17,
\qquad\text{whereas}\qquad
|R_{85}/C_2|=5.
\]
Hence $C_2/C_1$ and $R_{85}/C_2$ describe two distinct objects in the
stabilizer construction. The quotient $C_2/C_1$ parametrizes the
logical $X$-operator classes, whereas $R_{85}/C_2$ parametrizes the
physical $X$-error syndrome classes. In particular, these two quotient
spaces cannot be identified in this example.

The distinction is also seen directly from
$
C_2=\ker(\operatorname{Syn}_X).
$
Indeed, every $e\in C_2$, and hence every representative of a
transversal of $C_1$ in $C_2$, has trivial $X$-error syndrome, i.e.
$
\operatorname{Syn}_X(e)=1.
$
Thus distinct representatives of $C_2/C_1$ correspond to distinct
logical $X$-operator classes, but they are not distinguished by
stabilizer syndrome measurements.

\subsection{Failure of the norm shortcut given in \cite{GUZELTEPE2027102881} over
$\mathbb Z_{125}[i]$}
\label{subsec:example-Z125}

Let
$
R_{125}=\mathbb Z_{125}[i].
$
Since
$
57^2\equiv-1\pmod{125},
$
we use the splitting map
$$
\Phi(a+bi)
=
(a+57b,\;a-57b).
$$

Consider
$
C_2=\langle2+i\rangle,
\,
C_1=\langle25+25i\rangle.
$
Since
$
(2+i)(15+5i)=25+25i,
$
we have
$
C_1\subseteq C_2.
$

For $2+i$,
$
\Phi(2+i)=(59,70),
$
so that
$
\nu_5(59)=0,
\,
\nu_5(70)=1.
$
Theorem~\ref{thm:crt-cardinality} therefore gives
\[
|C_2|
=
5^{6-0-1}
=
3125.
\]

For
$
\alpha_1=25+25i, $
the Gaussian norm is
$
N(\alpha_1)
=
25^2+25^2
=
1250.$ Thus an unrestricted application of the norm quotient would give
\[
\frac{125^2}{1250}
=
\frac{25}{2},
\]
which is not an integer and therefore cannot be the cardinality of an ideal.

On the other hand,
\[
\Phi(25+25i)
=
(75,100),
\]
with
\[
\nu_5(75)=2,
\qquad
\nu_5(100)=2.
\]
Hence the CRT--valuation formula gives
\[
|C_1|
=
5^{6-2-2}
=
25.
\]
Consequently,
\[
[C_2:C_1]
=
\frac{3125}{25}
=
125.
\]

For completeness,
\[
C_2^{\perp_H}
=
\Ann(2-i).
\]
The element $
\beta=75+100i $
satisfies $\beta(2-i)
=
250+125i
\equiv0
\pmod{125}.$ 
Moreover,
\[
\Phi(\beta)=(25,0),
\]
so
\[
|\langle\beta\rangle|
=
5^{6-2-3}
=
5.
\]
Since
\[
|C_2^{\perp_H}|
=
\frac{15625}{3125}
=
5,
\]
we have
\[
C_2^{\perp_H}
=
\langle75+100i\rangle.
\]
Finally,
\[
(1+3i)(25+25i)
\equiv
75+100i
\pmod{125},
\]
and hence
\[
C_2^{\perp_H}
\subseteq
C_1
\subseteq
C_2.
\]

The principal purpose of this example is therefore the failure of
the unrestricted norm expression:
\[
\frac{q^2}{N(\alpha)}
\text{ need not give }
|\langle\alpha\rangle|,
\]
whereas the CRT--valuation formula remains valid.

We now apply the syndrome-kernel description of Section~6 to the
coset construction over
\[
R_{325}=\mathbb{Z}_{325}[i].
\]
Consider
\[
C_2=\langle \alpha_2\rangle,
\qquad
\alpha_2=3+2i,
\]
and
\[
C_1=\langle \alpha_1\rangle,
\qquad
\alpha_1=4+7i.
\]
Since
\[
\alpha_1=(2+i)\alpha_2,
\]
we have $C_1\subseteq C_2$. Moreover,
\[
|C_2|=8125,
\qquad
|C_1|=1625, \textnormal{ and hence }
|C_2/C_1|=5.
\]
A transversal of $C_1$ in $C_2$ is
$\varepsilon
=
\{0,\alpha_2,2\alpha_2,3\alpha_2,4\alpha_2\}.
$

Algebraically, the five elements of $\varepsilon$ represent the five
distinct classes of $C_2/C_1$. However, since
\[
\varepsilon\subseteq C_2=\ker(\operatorname{Syn}_X),
\]
we have
\[
\operatorname{Syn}_X(e)=\operatorname{Syn}_X(0)
\qquad\text{for every }e\in\varepsilon.
\]
Thus the five elements of $\varepsilon$ represent distinct logical
$X$-operator classes modulo the $X$-type stabilizer subgroup, but they
do not represent distinct physical $X$-error syndromes. Indeed,
\[
|C_2/C_1|=5,
\qquad
|R_{325}/C_2|
=\frac{325^2}{8125}
=13.
\]
\section{Conclusion}

We studied a ring-theoretic and stabilizer-theoretic framework for
CSS-type constructions over the ring
$R_q=\mathbb{Z}_q[i]$ under the splitting assumption on the prime
divisors of $q$. The CRT decomposition yields a valuation-based
cardinality formula for principal ideals and shows that the Hermitian
dual of a length-one principal ideal code $C=\langle\alpha\rangle$ is
$
C^{\perp_H}=\operatorname{Ann}(\sigma(\alpha)).
$

For nested codes
$
C_2^{\perp_H}\subseteq C_1\subseteq C_2,
$
we showed that the physical $X$-error syndrome map has kernel $C_2$.
Consequently, $C_2/C_1$ describes logical $X$-operator classes,
whereas $R_q^n/C_2$ describes physical $X$-error syndrome classes.
This distinction explains why a transversal of $C_1$ in $C_2$
does not, in general, provide syndrome-distinguishable error
representatives.

Using the physical syndrome quotient, we introduced an
$X$-syndrome transversal $\Gamma$ of $C_2$ in $R_q^n$ and showed
that
$
\mathcal{E}_{\Gamma}
=
\{X(\gamma):\gamma\in\Gamma\}
$
forms a correctable family of $X$-type errors. The resulting recovery
procedure identifies the unique representative associated with the
measured syndrome and applies the corresponding inverse translation.
The examples illustrate these structural distinctions and the role of
the CRT--valuation description in explicit computations.
\section*{Conflict of Interest} Both the authors declare that they have no conflict of interest.
\section*{Acknowledgments}
The first author would like to acknowledge PMRF (PMRF Id: 1403187) for its financial support.
\bibliographystyle{abbrv}
\bibliography{references}






\end{document}